\newtheorem{theorem}{Theorem}[section]
\newtheorem{lemma}[theorem]{Lemma}
\newtheorem{corollary}[theorem]{Corollary}
\theoremstyle{definition}
\newtheorem{definition}[theorem]{Definition}
\newtheorem{problem}[theorem]{Open problem}
\newcommand{\cclass}[1]{\ensuremath{\mathord{\textrm{#1}}}} 
\newcommand{\ruleset}[1]{\textsc{#1}}
\newcommand{\rsMisParck}{\ruleset{Mis\`ere PArcK}}
\newcommand{\rsMisPartizanArcK}{\ruleset{Mis\`ere Partizan Arc Kayles}}
\newcommand{\rsPosCNF}{\ruleset{PosCNF}}
\newcommand{\rsPositiveCNF}{\ruleset{PositiveCNF}}
\newcommand{\rsBoundedCL}{\ruleset{Bounded Two-Player Constraint Logic}}
\newcommand{\rsBCL}{\ruleset{B2CL}}
\newcounter{authcount}
\NewDocumentCommand{\authDetails}{m m m o}{%
    \stepcounter{authcount}%
    \IfNoValueTF{#4}{%
        \author[\arabic{authcount}]%
    }%
    {%
        \author[#4]%
    }%
    {%
        \mbox{#1\,$^{\textrm{\href{mailto:#2}{\Letter}}\,\,%
        \ifx&#3&\else\raisebox{-0.2ex}{\orcidlink{#3}}\,\fi}$}%
    }%
}
\title{\rsMisPartizanArcK{} is \cclass{PSPACE}-complete, even on Planar Graphs}
\affil[1]{ Florida Southern College\\Lakeland\\USA}
\affil[2]{ College of William \& Mary\\Williamsburg\\USA}
\affil[3]{ Memorial University of Newfoundland\\Canada}
\affil[4]{ Kyoto University\\Japan}
\affil[5]{ University of California, Berkeley\\USA}
\date{}
\begin{document}

\maketitle

\begin{abstract}
    \noindent
    We show that \rsMisPartizanArcK{} is \cclass{PSPACE}-complete on planar
    graphs via a reduction from \rsBoundedCL{}.  Furthermore, we show how to embed our
    gadgets onto the square and triangular grids.  In order to clearly explain these results, we get into
    the details of \rsBoundedCL{} and find three \cclass{PSPACE}-complete variants of that as well.
\end{abstract}

\section{Introduction}

Combinatorial Games are two-player games with no randomness and no hidden
information where the players (left and right) alternate turns according to the
rules until the game ends.  A \emph{position} is a game state, and the
positions that the left and right players can move to are the left and right
\emph{options}.  For many rulesets, the default \emph{normal play} convention
is used: when the set of options a player has from the current position is
empty, that player loses the game.  Under \emph{mis\`ere play}, when a player
has no options available, they win instead.  A combinatorial game ruleset is
\emph{impartial} if the two players have the same options from every position.
Otherwise, the game is strictly \emph{partizan}.  \cite{WinningWays:2001}, \cite{LessonsInPlay:2007}, \cite{SiegelCGT:2013}.

\ruleset{Arc Kayles} is an impartial combinatorial game played on a simple
undirected graph.  Each turn a player chooses one of the edges on the graph,
$(u, v)$, then removes both vertices $u$ and $v$ as well as all edges incident
to either.  It is played with the normal play convention.

\ruleset{Partizan Arc Kayles} is a partizan game just like \ruleset{Arc
Kayles}, except that all edges in the graph are coloured Blue or Red.  On their
turn, the left player can only pick a blue edge and the right player can only
pick a red edge, though edges of both colours adjacent to those are deleted.

\ruleset{Mis\`ere Partizan Arc Kayles} is the same as \ruleset{Partizan Arc
Kayles}, except that it is played with the mis\`ere play convention.  Thus the
winner of a game is the first player to start their turn with no edges of their
colour.  We provide a playable version of \rsMisParck{} at
\url{https://kyleburke.info/DB/combGames/miserePartizanArcKayles.html}.

Computational Complexity provides one means to assess how fun a game is.  For games
like \ruleset{Undirected Geography} where the winning strategy can be
determined in polynomial time \cite{DBLP:journals/tcs/FraenkelSU93}, there is
no meaningful competition if one or both players know how to choose a winning option.
If there are polynomial options from any position, then by using the process to
determine winnability a player can select a winning move if one exists in polynomial time.

Instead, evidence that there may not be an efficient algorithm to solve a game
adds to the intensity of the competition as players use uncertain heuristics to
make their moves.  \ruleset{Generalized Chess}, for example, is
\cclass{EXPTIME}-complete, so no polynomial-time algorithm exists to determine
the winnability from any position.  Many loop-free games have been shown to be
\cclass{PSPACE}-complete, meaning that unless $\cclass{P} = \cclass{PSPACE}$,
no polynomial time algorithm exists for them either
\cite{PapadimitriouBook:1994}, e.g. \ruleset{Hex} \cite{Reisch:1981},
\ruleset{Amazons} \cite{DBLP:books/daglib/0023750}, and \ruleset{Geography}
\cite{DBLP:journals/jcss/Schaefer78}.  The intractability of these games
increases their replayability as players hone their imperfect strategies.
Unlike many other scientific fields, showing that a ruleset is computationally hard
is beneficial and can help explain (or even promote) its popularity.

The computational complexity of \ruleset{Partizan Arc Kayles} is of significant
interest because it is a general version of the game \ruleset{Domineering}, for
which the complexity remains unsolved despite multiple
attempts\footnote{Finding \ruleset{Domineering} to be \cclass{PSPACE}-complete
would prove the same for \ruleset{Partizan Arc Kayles}, but not the other way
around.}.

In sections \ref{sec:overview} and \ref{sec:reduction}, we prove that
\rsMisParck{} is \cclass{PSPACE}-complete.  First, in section
\ref{sec:overview}, we provide an overview of the reduction, including a
description of the ruleset  \rsPosCNF{} and a thorough discussion of \rsBCL{}
and some variants, which we reduce directly from.  (This includes an
explanation of why we cannot reduce from \rsPosCNF{} directly.)  In Section
\ref{sec:reduction}, we describe the individual gadgets for the reduction to
complete the proof of hardness.  We provide gadgets for both square and
triangular grids, showing that the game is \cclass{PSPACE}-complete in both more-specific
cases.

\section{Reduction Overview}
\label{sec:overview}

The reduction we present to show \rsMisParck{} is \cclass{PSPACE}-complete is
from \rsPositiveCNF{} (\rsPosCNF{}) via \rsBoundedCL{} (\rsBCL).  Much of the
terminology stems from the \rsPosCNF{} part, but the detour through \rsBCL{}
provides us with the benefit of planarity.  This is a common path for proving
game complexity.  We include this section to help explain this for readers not
used to these constructions.

\subsection{\texorpdfstring{\rsPositiveCNF{}}{PositiveCNF}}

\rsPositiveCNF{} (\rsPosCNF) is a boolean-formula game with players named True
and False.  The position consists of a conjunctive normal form (CNF) boolean
formula without negations (``positive'') and the current state of the
variables, with each being either \emph{true}, \emph{false}, or
\emph{unassigned}.  On their turn, a player chooses one of the unassigned
variables and assigns their value to it.  After all variables have been
assigned, the winner is the player whose identity is what the formula evaluates
to.  Since the formula is positive, it would never benefit a player to assign
their opponent's value to a variable.

Aside from the positive nature of the formula, there are some important
distinctions between the Quantified Boolean Formula (game) and \rsPosCNF{}:
\begin{itemize}
    \item \rsPosCNF{} is not known to be hard on only 3-CNF formulas, so some
        of the clauses may be much longer in order for it to be computationally hard.
    \item The order of variables is not decided ahead of time.  The difficulty
        of each turn is not in picking the value to assign, but in choosing which of
        the unassigned variables to claim.
\end{itemize}

\subsection{\texorpdfstring{\rsBoundedCL{}}{Bounded Two-Player Constraint Logic}}

\rsBoundedCL{} (\rsBCL) is a game where each turn consists of flipping an edge
in a directed graph.  The position consists of a digraph of coloured (Blue,
Red) arcs that each have weight 1 or 2.  Each edge can only be flipped once, so
in a given position, some edges might have already been flipped.  One blue arc
and one red arc are each indicated as the goals for that player.  Each vertex
must have a total in-weight of at least 2.  A turn consists of picking an arc
of your colour and flipping it without violating the in-weight condition.  If a
player flips their goal arc, then the game ends and they are declared the
winner. If a player is ever unable to move, then the game ends in a draw.

\rsBCL{} is defined and shown to be \cclass{PSPACE}-hard in
\cite{DBLP:books/daglib/0023750}.  Note: we use a different description of the
game. In the original work, the edges are coloured White and Black to identify
the two players (Blue and Red for us) and colours are instead used to
identify the weight of the edges.  (Red for 1 and blue for 2.)  In our diagrams
of \rsBCL{}, we use extra arrowheads to indicate the edge weights, and just color the edges by their player.

The reduction is from \rsPosCNF{} and includes clever tricks so that the only
required vertices to show hardness are encodings of And, Or, Fanout, Choice,
and Variable gadgets; all other vertices can be transformed into combinations
of those five.  The reduced graphs are designed so that one of the goal edges
will be triggered before either player can't move, avoiding the issue of draws.
Additionally, a crossover gadget can be built from those five so that the
planar version of \rsBCL{} is also \cclass{PSPACE}-hard.

\begin{figure}[ht]
    \begin{center}
    \begin{tikzpicture}[node distance = 1.2cm, minimum size = .2cm, inner sep = .07cm, ultra thick]
        \node[circle, draw, fill=black] (andC) at (0, 0) {};
        \node[] (andL) [below left of=andC] {};
        \node[] (andT) [above of=andC] {};
        \node[] (andR) [below right of=andC] {};
        \node[] (andB) [below of=andC] {And};

        \path[->]
            (andC) edge [color = blue] (andR)
            (andC) edge [color = blue] (andL)
        ;
        \path[->>]
            (andT) edge [color = blue] (andC)
        ;

        \node[circle, draw, fill=black] (orC) at (2, 0) {};
        \node[] (orL) [below left of=orC] {};
        \node[] (orT) [above of=orC] {};
        \node[] (orR) [below right of=orC] {};
        \node[] (orB) [below of=orC] {Or};

        \path[->]
        ;
        \path[->>]
            (orC) edge [color = blue] (orR)
            (orC) edge [color = blue] (orL)
            (orT) edge [color = blue] (orC)
        ;

        \node[circle, draw, fill=black] (choiceC) at (4, 0) {};
        \node[] (choiceL) [above left of=choiceC] {};
        \node[] (choiceT) [above of=choiceC] {};
        \node[] (choiceR) [above right of=choiceC] {};
        \node[] (choiceB) [below of=choiceC] {Choice};

        \path[->]
            (choiceC) edge [color = blue] (choiceB)
            (choiceL) edge [color = blue] (choiceC)
            (choiceR) edge [color = blue] (choiceC)
        ;
        \path[->>]
        ;

        \node[circle, draw, fill=black] (fanoutC) at (6, 0) {};
        \node[] (fanoutL) [above left of=fanoutC] {};
        \node[] (fanoutT) [above of=fanoutC] {};
        \node[] (fanoutR) [above right of=fanoutC] {};
        \node[] (fanoutB) [below of=fanoutC] {Fanout};

        \path[->]
            (fanoutL) edge [color = blue] (fanoutC)
            (fanoutR) edge [color = blue] (fanoutC)
        ;
        \path[->>]
            (fanoutC) edge [color = blue] (fanoutB)
        ;

        \node[circle, draw, fill=black] (variableC) at (3, -2.5) {};
        \node[] (variableL) [left of=variableC] {};
        \node[] (variableR) [right of=variableC] {};
        \node[] (variableB) at (3, -3.2) {Variable};

        \path[->]
        ;
        \path[->>]
            (variableL) edge [color = blue] (variableC)
            (variableR) edge [color = red] (variableC)
        ;
    \end{tikzpicture}
    \caption{B2CL Gadgets.}
    \label{fig:b2clGadgets}
    \end{center}
\end{figure}
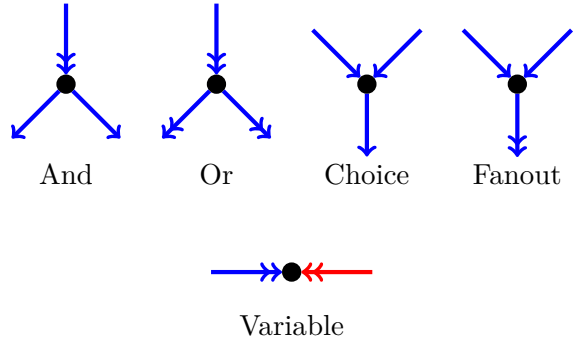

Since \rsBCL{} isn't a boolean formula game, the gadgets are described using
the terms Active and Inactive in place of True and False, respectively.  In
these \rsPosCNF{}-based games, the difficult decisions for the players are
still in the order they choose the variables.  After that, Blue spends their
moves appropriately propagating the active and inactive signals through the
all-blue (And, Or, Fanout, and Choice) vertices to see whether they can flip
their goal edge.

As with \rsPosCNF{}, it never hurts Blue to have an Active output from one
vertex gadget instead of an Inactive with all else being the same.

\cite{DBLP:books/daglib/0023750} hints that instead of using their goal edges,
``it is normal to define the loser as the first player unable to move.''  We
spell that out here, both for Normal Play and an intermediate game where the
players take on roles similar to those in Maker-Breaker games\footnote{This
will not be an actual Maker-Breaker game, as those have additional
requirements, e.g. coming from a positional game.}.

\newcommand{\rsBBBCLFull}{\ruleset{Builder-Blocker} \rsBoundedCL{}}
\newcommand{\rsBBBCL}{\ruleset{BBB2CL}}

\begin{definition}[\rsBBBCLFull{}]
    \textsl{\rsBBBCLFull{}} (\textsl{\rsBBBCL{}}) is the same game as
    \textsl{\rsBCL{}}, except that there is no Red goal edge.  If Blue flips
    the(ir) goal edge, the game ends and they win.  If the goal edge has not
    been flipped and the current player can't make a move, then instead the
    game ends with Red as the winner.
\end{definition}

\newcommand{\rsNPBCLFull}{\ruleset{Normal-Play} \rsBoundedCL{}}
\newcommand{\rsNPBCL}{\ruleset{NPB2CL}}

\begin{definition}[\rsNPBCLFull{}]
    \textsl{\rsNPBCLFull{}} (\textsl{\rsNPBCL{}}) is the same game as
    \textsl{\rsBBBCL{}}, except that there is no Blue goal edge.  When a player
    is unable to flip an edge, the game ends with the other player as the
    winner.
\end{definition}

We next show that these are both \cclass{PSPACE}-complete.  Our reductions will
piggyback off of the reduction from \cite{DBLP:books/daglib/0023750} with
slight modifications. An example of that reduction from \rsPosCNF{} is
included in Figure \ref{fig:b2clHardness}.

\begin{figure}[ht]
    \begin{center}
    \begin{tikzpicture}[node distance = 1.2cm, minimum size = .2cm, inner sep = .07cm, ultra thick]
        \node[circle, draw, fill=black] (w) at (0, 0) [label=left:{w}] {};
        \node[] (wDown) [below of=w] {};
        \node[circle, draw, fill=black] (wUp) [above of=w] {};

        \node[circle, draw, fill=black] (x) at (2, 0) [label=left:{x}] {};
        \node[] (xDown) [below of=x] {};
        \node[circle, draw, fill=black] (xUp) [above of=x] {};

        \node[circle, draw, fill=black] (y) at (4, 0) [label=left:{y}] {};
        \node[] (yDown) [below of=y] {};
        \node[circle, draw, fill=black] (yUp) [above of=y] {};

        \node[circle, draw, fill=black] (z) at (6, 0) [label=left:{z}] {};
        \node[] (zDown) [below of=z] {};
        \node[circle, draw, fill=black] (zUp) [above of=z] {};

        \node[circle, draw, fill=black] (g) at (8, 0) [] {};
        \node[] (gDown) [below of=g] {};
        \node[] (gUp) [above of=g] {};

        \node[circle, draw, fill=black] (wa) at (0.1, 2.2) {};
        \node[circle, draw, fill=black] (worx) at (0.2, 3.2) {};
        \node[circle, draw, fill=black] (xa) at (1.5, 2.2) {};
        \node[circle, draw, fill=black] (xb) at (4, 2.2) {};
        \node[circle, draw, fill=black] (wb) at (1.1, 3.2) {};
        \node[circle, draw, fill=black] (worxory) at (1.0, 4.0) {};
        \node[circle, draw, fill=black] (worxoryB) at (1.0, 4.8) {};
        \node[circle, draw, fill=black] (za) at (5.2, 2.2) {};
        \node[circle, draw, fill=black] (zb) at (6.5, 2.2) {};
        \node[circle, draw, fill=black] (worz) at (4, 4.0) {};
        \node[circle, draw, fill=black] (worzB) at (4, 4.8) {};
        \node[circle, draw, fill=black] (xorz) at (6, 4.0) {};
        \node[circle, draw, fill=black] (xorzB) at (6, 6.0) {};
        \node[circle, draw, fill=black] (worxoryAndworz) at (2.5, 5.5) {};
        \node[circle, draw, fill=black] (worxoryAndworzB) at (4.0, 6.0) {};
        \node[circle, draw, fill=black] (wholeForm) at (5.0, 6.5) {};
        \node[circle, draw, fill=black] (blueWin) at (5.0, 7.2) {};

        \node[circle, draw, fill=black] (xRedA) at (8, 3) [] {};
        \node[circle, draw, fill=black] (xRedB) [above of=xRedA, label=right:{\phantom{xx}$\times k$}] {};
        \node[circle, draw, fill=black] (xRedC) [above of=xRedB] {};

        \path[->]
            (wa) edge [color = blue] (wUp)
            (xa) edge [color = blue] (xUp)
            (xb) edge [color = blue] (xUp)
            (wb) edge [color = blue] (wUp)
            (za) edge [color = blue] (zUp)
            (zb) edge [color = blue] (zUp)
            (worxoryAndworz) edge [color = blue] (worxoryB)
            (worxoryAndworz) edge [color = blue] (worzB)
            (wholeForm) edge [color = blue] (worxoryAndworzB)
            (wholeForm) edge [color = blue] (xorzB)
        ;
        \path[->>]
            (wDown) edge [color = red] (w)
            (wUp) edge [color = blue] (w)
            (xDown) edge [color = red] (x)
            (xUp) edge [color = blue] (x)
            (yDown) edge [color = red] (y)
            (yUp) edge [color = blue] (y)
            (zDown) edge [color = red] (z)
            (zUp) edge [color = blue] (z)

            (worx) edge [color = blue] (wa)
            (worx) edge [color = blue] (xa)
            (worxory) edge [color = blue] (worx)
            (worxory) edge [color = blue] (yUp)
            (worz) edge [color = blue] (wb)
            (worz) edge [color = blue] (za)
            (xorz) edge [color = blue] (xb)
            (xorz) edge [color = blue] (zb)
            (worxoryB) edge [color = blue] (worxory)
            (worzB) edge [color = blue] (worz)
            (worxoryAndworzB) edge [color = blue] (worxoryAndworz)
            (xorzB) edge [color = blue] (xorz)
            (blueWin) edge [color = blue] node [left] {Blue goal\phantom{x}} (wholeForm)

            (gDown) edge [color = red] node [right] {Red goal} (g)
            (g) edge [color = blue] (gUp)

            (xRedA) edge [color = red] (xRedB)
            (xRedC) edge [color = red] (xRedB)
        ;
    \end{tikzpicture}
    \caption{
        Translated from Figure 6.1 of \cite{DBLP:books/daglib/0023750}, a
        \rsBCL{} position equivalent to the \rsPosCNF{} game on $(w \vee x \vee
        y) \wedge (w \vee z) \wedge (x \vee z)$.  For space reasons (as in the
        source figure) it does not contain all simplifications to only the five
        basis vertices and does not employ the crossover gadgets.  We do,
        however, include the extra $k$ hidden red plays described.
    }
    \label{fig:b2clHardness}
    \end{center}
\end{figure}
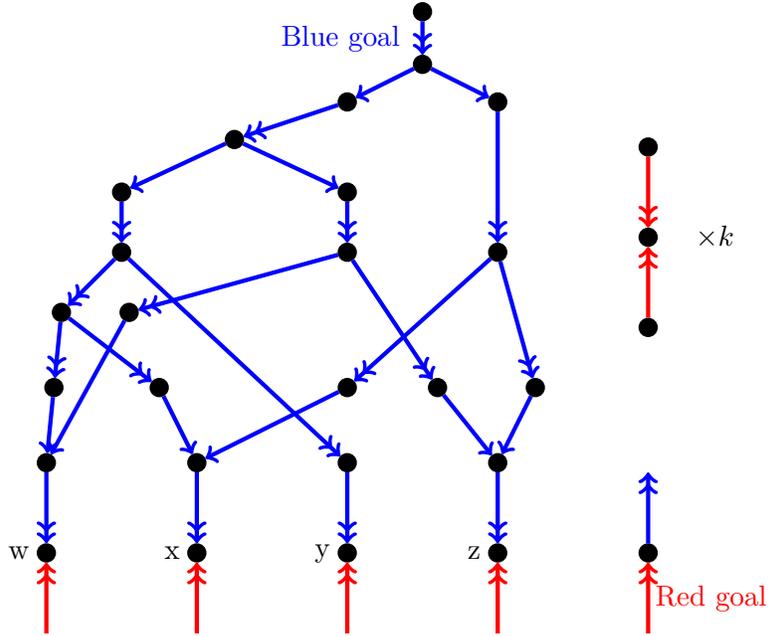



There are four main parts to the construction:
\begin{itemize}
    \item The bottom row of variable gadgets.
    \item The all-blue circuit-like gadgetry leading from the variables to (and
        including) the goal.
    \item The $k$ extra plays for Red.  $k$ should be chosen so that it is at
        least as large as the number of plays Blue has in the circuitry.
    \item The Red goal edge, adjacent to a final Blue edge that Blue will have
        to flip if they are unable to flip their goal edge, handing the win to
        Red.
\end{itemize}
Together, this assures that Blue wins exactly when they can choose variables to
satisfy the original positive CNF formula.

\rsBCL{} is used extensively to show \cclass{PSPACE}-hardness in combinatorial
games for three main reasons:
\begin{enumerate}
    \item There are only five basis vertices that are needed to include instead
        of every possible Constraint Logic vertex.
    \item Those five gadgets can be abstracted into their named meaning; often
        the actual rules for constraint logic (e.g. edge flipping) are (safely)
        overlooked.
    \item The crossover gadget can be built from the other five basis vertices,
        delivering planarity automatically.
\end{enumerate}

We continue by showing the necessary parts so that \rsBBBCL{} and \rsNPBCL{}
can also be used for the same purposes.

\begin{lemma}
    \label{lem:bbbcl}
    \textsl{\rsBBBCLFull{}} is \cclass{PSPACE}-complete.
\end{lemma}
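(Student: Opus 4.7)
My plan is to handle PSPACE-membership first, which is immediate: each arc can be flipped at most once, so the game tree has polynomial depth and a standard recursive evaluation runs in polynomial space. The substance of the lemma is PSPACE-hardness, which I would obtain by reusing the \rsPosCNF{}-to-\rsBCL{} construction of \cite{DBLP:books/daglib/0023750} (sketched in \cref{fig:b2clHardness}) with a small surgery.

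Given a \rsPosCNF{} instance, I build the same graph as in the \rsBCL{} reduction but delete the Red goal edge together with its adjacent give-up Blue edge, retaining the variable gadgets, the all-blue propagation circuit, the Blue goal, and the $k$ hidden Red arcs, where $k$ is chosen at least as large as the number of Blue plays in the circuit. The claim is that Blue wins the resulting \rsBBBCL{} position iff True wins the \rsPosCNF{} instance. For the forward direction, Blue plays True's winning \rsPosCNF{} strategy on the blue variable arcs and then propagates the resulting Active signals through the circuit to flip the Blue goal; the requirement that $k$ dominate the number of Blue circuit plays guarantees that Red always has a move available on each of their turns during the propagation phase, so Blue reaches the goal flip before any stalemate can occur.

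For the reverse direction, Red plays False's winning \rsPosCNF{} strategy on the red variable arcs, using the hidden edges only as filler once no variable choices remain. Since the underlying formula is never satisfied, the in-weight constraints at the And-gadgets and at the goal prevent Blue from ever legally flipping the Blue goal, so Blue eventually exhausts their moves and, by the stalemate rule of \rsBBBCL{}, Red wins. The main obstacle will be verifying that this reverse direction is airtight — that Blue cannot profit by opportunistic circuit flips or by trying to run Red out of moves first — but both concerns dissolve quickly: every stalemate in \rsBBBCL{} is a Red win regardless of whose turn it is, so Blue gains nothing from exhausting Red, and the unchanged gadgets of \cref{fig:b2clGadgets} ensure that stray Blue circuit flips never enable the goal flip without genuine satisfaction of the formula.
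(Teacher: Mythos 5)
Your overall route is the paper's: reuse the Hearn--Demaine construction of \cref{fig:b2clHardness}, delete the component containing the Red goal, and let the stalemate-favours-Red rule play the role that the give-up edge played in \rsBCL{}. But there is a genuine gap in your forward direction, and it sits exactly in the observation you make in passing: ``every stalemate in \rsBBBCL{} is a Red win regardless of whose turn it is.'' You use this only to conclude that Blue gains nothing by exhausting Red, but its other face is that \emph{Red} gains everything by exhausting Red. Red can therefore decline to claim any variables and spend the entire variable phase burning the hidden filler arcs; each variable Blue then claims kills the corresponding red variable arc, so Red's total supply of moves shrinks to just the $k$ fillers while Blue still owes roughly $n$ variable moves plus all the circuit moves plus the goal flip. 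With $k$ chosen only to dominate the number of Blue circuit plays, Red runs out of legal moves partway through, the stalemate rule fires, and Red wins even though the formula is satisfiable. Your check that ``Red always has a move available on each of their turns during the propagation phase'' silently assumes Red cooperated during the variable phase, which in \rsBBBCL{} (unlike in \rsBCL{}, where a stalemate is merely a draw) is no longer in Red's interest.

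The paper closes this hole by doubling the number of extra red plays to $2k$: since the circuit contains at least one blue arc per variable, $2k$ dominates the number of moves Blue makes in total, so Red cannot starve themself into an early stalemate no matter how they distribute their turns between variables and fillers. Your proof needs the same (or an equivalent) adjustment to the filler count, together with the corresponding counting argument covering the variable phase as well as the propagation phase; the rest of your argument, including the membership claim and the reverse direction, matches the paper's.
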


\begin{proof}
    We use the same construction from \cite{DBLP:books/daglib/0023750} (example
    in Figure \ref{fig:b2clHardness}) except that the connected component with
    the Red goal is missing and we double the number of extra red plays ($2k$
    instead of $k$).  Since Red has so many moves, the game will not end due to
    them running out of plays.  Red will play on as many variables as possible,
    then just make their moves waiting to see whether Blue is able to activate
    the goal.  If Blue can, then they will win.  Otherwise, they will run out of
    moves and thus the game will end in Red's favour.
\end{proof}

\begin{corollary}
    \cclass{PSPACE}-hard instances of \textsl{\rsBBBCL{}} can be constructed
    from only the same five basis vertices as in \textsl{\rsBCL{}}.
\end{corollary}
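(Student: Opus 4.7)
The plan is to inherit the basis-vertex decomposition directly from the reduction used to prove \cref{lem:bbbcl} and verify that both modifications introduced there preserve it. First, I would recall that the reduction of \cite{DBLP:books/daglib/0023750}, pictured in \cref{fig:b2clHardness}, already comes with a recipe for rewriting each of its vertices using only the And, Or, Choice, Fanout, and Variable basis vertices. In particular, the row of variables, the all-blue circuitry, and any higher-fan-in gates are expressed in the standard way as combinations of those five gadgets.

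Second, I would examine the two changes that the proof of \cref{lem:bbbcl} makes to this construction. The first change deletes the connected component containing the Red goal edge and its incident blue edge; since this is purely removal, it cannot introduce any new vertex type and is obviously compatible with the basis restriction. The second change replaces the $k$ extra red plays with $2k$ of them. I would observe that each extra red play in \cref{fig:b2clHardness} is nothing more than a small disjoint subgraph offering a red edge Red may flip without affecting in-weight conditions anywhere else. This is exactly what the Variable gadget provides: an isolated red in-edge that Red can activate at will, so $2k$ disjoint Variable gadgets (or even $2k$ Variable--Fanout pairs if a specific in-weight pattern is required) implement the pool of red plays using only basis vertices.

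The main obstacle, and the step I would treat most carefully, is checking that these substituted Variable-based fillers really do behave like the extra red plays in the original figure: they must be flippable by Red at any point during the game and must never be reachable or useful for Blue. Since a lone Variable gadget has only a red and a blue in-edge with no onward connection, Red's flip is always legal under the in-weight constraint and is strategically inert, so Red can use these as the needed stalling moves while Blue tries to activate the goal circuitry. Combining this with the first observation, the entire \rsBBBCL{} reduction sits inside the five basis vertices, and since the crossover gadget of \cite{DBLP:books/daglib/0023750} is itself assembled from those five, the same argument yields planar hard instances should one wish to strengthen the corollary further.
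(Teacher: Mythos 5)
Your overall strategy matches the paper's: the proof there is a one-liner observing that the modified reduction only \emph{removes} a component and \emph{duplicates} ($k$ more copies of) a vertex already present in the original construction, so the basis is inherited unchanged. The one place you diverge is in how the pool of extra red plays is realized: the paper simply takes more copies of the extra-red-play vertex that already appears in \cref{fig:b2clHardness} (a vertex with two red in-edges, touchable only by Red), whereas you propose substituting disjoint Variable gadgets. That substitution is not quite inert: a Variable vertex has one blue and one red in-edge, so Blue can also claim it, which changes the move-counting Red relies on. In \rsBBBCL{} this happens to be harmless (Blue gains nothing by wasting a move on a filler, and any such waste only helps Red), but it is an extra strategic argument you would need to make explicit, and it is avoided entirely by the paper's choice of just duplicating the red-only vertex that the original reduction already decomposes into the five basis gadgets.
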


\begin{proof}
    Our reduction is the same except that we remove one component and add
    $k$ more copies of a vertex that already existed in the original
    reduction.
\end{proof}

\begin{lemma}
    \label{lem:npbcl}
    \textsl{\rsNPBCLFull{}} is \cclass{PSPACE}-complete.
\end{lemma}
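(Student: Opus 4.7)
The plan is to reduce from $\rsBBBCL{}$, already shown PSPACE-complete in \Cref{lem:bbbcl}. PSPACE-containment for $\rsNPBCL{}$ follows from the standard alternating polynomial-space algorithm. Given a $\rsBBBCL{}$ instance $G$, we would modify $G$ into an $\rsNPBCL{}$ instance $G'$ so that Blue wins $G'$ if and only if Blue wins $G$.

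The modification has two parts. First, we strip the goal-edge designation from the Blue goal edge of $G$, so it becomes an ordinary blue edge. Second, we splice in, immediately past that edge, a \emph{reward cascade} consisting of $L$ stages; each stage contributes one blue-only move that can only be played after the former goal edge is flipped. Concretely, each stage can be built from a Variable gadget whose red half is rendered permanently un-flippable by routing its head into a vertex whose in-weight constraint would be violated by the flip, so in effect the stage is a single blue edge gated behind the preceding stage's signal. Finally, we adjust the number of Red extra-play gadgets to a value $R'$ chosen so that, when the cascade is locked, $R'$ strictly exceeds the total number of Blue-playable edges in the rest of $G'$; and we choose $L$ strictly larger than $R' + n$, where $n$ is the number of variables, so that the cascade overshoots any advantage the variables can give to Red.

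Correctness would split into two cases. If Blue has a winning strategy in $G$, Blue replays it in $G'$, flips the former goal edge, and then plays out the cascade; a counting argument shows Blue's total move count exceeds Red's, so Red is the one to run out of moves. If Blue cannot force the goal edge to flip, the cascade remains locked; then Blue's total available moves are bounded by circuit-plus-variable moves, which by the choice of $R'$ is strictly less than Red's, so Blue runs out first and loses.

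The main obstacle will be the bookkeeping: we must verify that the chosen $L$ and $R'$ cleanly separate the two cases regardless of the order in which the players interleave their variable and circuit moves, and that neither player can exploit the in-weight constraint to waste a turn. The order issue is handled by observing that both the cascade and the extra plays have moves that are always available once their prerequisites are met, so neither player is stuck except when they have genuinely exhausted their options; and the turn-wasting concern is handled by the fact that in combinatorial games a player cannot skip a turn, so the total move counts truly determine who runs out first.
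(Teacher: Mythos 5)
Your proposal is correct and is essentially the paper's own argument: the paper likewise takes the \rsBBBCL{} construction and replaces the Blue goal edge with a long gated Blue chain (length $2k$, realized via Or gadgets in the accompanying corollary) so that unlocking it lets Blue outlast Red's padded supply of extra moves, while failing to unlock it leaves Blue to run out first. Your explicit choice of $R'$ and $L > R'+n$ is just a more belt-and-braces version of the paper's $2k$-versus-$2k$ count, and your cascade built from disabled Variable gadgets plays the same role as the paper's Blue chain.
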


\begin{proof}
    We use the same construction as in lemma \ref{lem:bbbcl}, except that
    instead of a single goal edge for Blue, this is the start of a Blue chain
    of length $2k$.  Now, if Blue is able to flip what was the goal edge, they
    can now earn $2k$ extra moves, enough to continue playing after Red uses
    what remains of their $2k$ moves.
\end{proof}

\begin{corollary}
    \cclass{PSPACE}-hard instances of \textsl{\rsNPBCL{}} can be constructed
    from only the same five basis vertices as in \textsl{\rsBCL{}}.
\end{corollary}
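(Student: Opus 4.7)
The plan is to reuse the proof of the previous corollary (for \rsBBBCL{}) almost verbatim, leaving only the one new piece to analyze: the Blue chain of length $2k$ that, in Lemma~\ref{lem:npbcl}, replaces the single Blue goal edge.

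First I would invoke the \rsBBBCL{} corollary to conclude that the variable row, the all-blue circuitry, and the $2k$ extra Red plays are already built from only the five basis vertices. Hence the reduction for \rsNPBCL{} is basis-only precisely when the replacement Blue chain is.

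Next I would realize the Blue chain as a sequence of Or gadgets (equivalently, alternating And/Fanout gadgets so that the weight-$1$/weight-$2$ pattern matches) wired so that each successive edge becomes flippable only after the previous one has been flipped, starting from the wire that originally entered the Blue goal edge. Each intermediate Or behaves as a pass-through along the wire, and the extra incident outputs are absorbed by the next gadget along the chain or terminated into a Fanout. This yields Blue exactly $2k$ sequential moves once the circuit signals a satisfied formula, matching the counting argument in the proof of Lemma~\ref{lem:npbcl}.

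The main (mild) obstacle is verifying that the in-weight $\ge 2$ constraint and the edge orientations line up at the splice where the head of the chain meets the output of the top-level circuit, and that Blue cannot begin flipping chain edges prematurely. Once this interface is checked, the corollary follows immediately, since every vertex in the augmented construction is one of And, Or, Fanout, Choice, or Variable.
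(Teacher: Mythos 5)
Your proposal matches the paper's proof: the paper likewise keeps the \rsBBBCL{} construction intact and realizes the appended $2k$-long Blue chain as a sequence of connected Or gadgets (illustrated in Figure~\ref{fig:npb2clHardness}). The extra interface checks you flag are sensible but the paper treats them as evident from the figure, so no new idea is needed.
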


\begin{proof}
    Our reduction is the same as for \textsl{\rsBBBCL{}}, except that we've
    added a long chain for Blue after what was their goal edge.  That chain can
    be created from connected Or gadgets, as we demonstrate in \cref{fig:npb2clHardness}.
\end{proof}

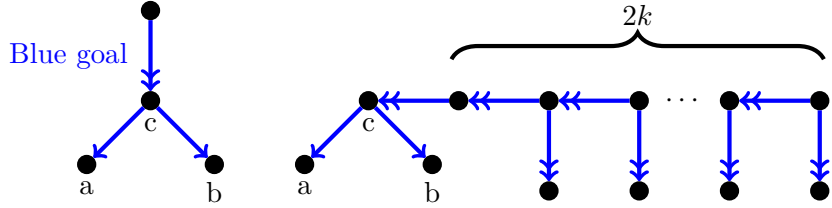
\begin{figure}[ht]
    \begin{center}
    \begin{tikzpicture}[node distance = 1.2cm, minimum size = .2cm, inner sep = .07cm, ultra thick]
        \node[circle, draw, fill=black] (a) at (0, 0) [label=below:{a}] {};
        \node[circle, draw, fill=black] (c) [above right of=a, label=below:{c}] {};
        \node[circle, draw, fill=black] (b) [below right of=c, label=below:{b}] {};
        \node[circle, draw, fill=black] (top) [above of=c] {};
        \node[circle, draw, fill=black] (ab) [right of=b, label=below:{a}] {};
        \node[circle, draw, fill=black] (cb) [above right of=ab, label=below:{c}] {};
        \node[circle, draw, fill=black] (bb) [below right of=cb, label=below:{b}] {};
        \node[circle, draw, fill=black] (k0) [right of=cb] {};
        \node[circle, draw, fill=black] (k1) [right of=k0] {};
        \node[circle, draw, fill=black] (k1b) [below of=k1] {};
        \node[circle, draw, fill=black] (k2) [right of=k1] {};
        \node[circle, draw, fill=black] (k2b) [below of=k2] {};
        \node[circle, draw, fill=black] (k3) [right of=k2] {};
        \node[circle, draw, fill=black] (k3b) [below of=k3] {};
        \node[circle, draw, fill=black] (k4) [right of=k3] {};
        \node[circle, draw, fill=black] (k4b) [below of=k4] {};

        \path[->]
            (c) edge [color = blue] (a)
            (c) edge [color = blue] (b)
            (cb) edge [color = blue] (ab)
            (cb) edge [color = blue] (bb)
        ;
        \path[->>]
            (top) edge [color = blue] node [left] {Blue goal\phantom{x}} (c)
            (k0) edge [color = blue] (cb)
            (k1) edge [color = blue] (k0)
            (k1) edge [color = blue] (k1b)
            (k2) edge [color = blue] (k1)
            (k2) edge [color = blue] (k2b)
            (k3) edge [color = blue] (k3b)
            (k4) edge [color = blue] (k3)
            (k4) edge [color = blue] (k4b)
        ;
        \path[]
            (k3) edge [draw=none] node {$\cdots$} (k2)
        ;
        \draw [decorate, decoration = {brace, amplitude=10pt}] (4.85, 1.4) --  (9.8, 1.4) node [pos=0.5,above=10pt,black]{$2k$};

    \end{tikzpicture}
    \caption{
        Modification of the Blue goal from the reduction to reach \rsNPBCL{}.  On the left side is the goal part of the original \rsBCL{} and \rsBBBCL{} reductions.  On the right side is the modification using additional Or gadgets that gives Blue $2k$ additional moves.
    }
    \label{fig:npb2clHardness}
    \end{center}
\end{figure}

This second result means that future results can build off of this without
showing a separate goal gadget.

Since the main topic of this paper is a mis\`ere game, we include a mis\`ere version of
\rsBCL{} and prove the analogous results about that.

\newcommand{\rsMPBCLFull}{\ruleset{Mis\`ere Play} \rsBoundedCL{}}
\newcommand{\rsMPBCL}{\ruleset{MPB2CL}}

\begin{definition}[\rsMPBCLFull{}]
    \textsl{\rsMPBCLFull{}} (\textsl{\rsMPBCL{}}) is the same game as
    \textsl{\rsNPBCL{}}, except that when a player is unable to flip an edge on
    their turn, they win the game.
\end{definition}

\begin{lemma}
    \textsl{\rsMPBCLFull{}} is \cclass{PSPACE}-complete.
\end{lemma}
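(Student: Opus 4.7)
The plan is to adapt the NPB2CL reduction of \cref{lem:npbcl} so that, under the misère convention, flipping the goal forces the opponent (rather than the player) to make additional moves. Since in misère a stuck player wins, the goal-flipper benefits from compelling the opponent into forced play, which is exactly the incentive we need to preserve.

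Concretely, I would take the NPB2CL instance produced by \cref{lem:npbcl} and make two swaps. First, recolour the $2k$-chain that follows the goal so that it consists of Red edges rather than Blue ones; Blue flipping the goal now activates a chain that Red is compelled to flip on subsequent Red turns. Second, replace the $2k$ Red ``extra'' plays with $2k$ Blue extras, and if needed lengthen the Red chain so that the counts still tip in the intended direction. The variable gadgets, the all-Blue circuitry, and the goal edge itself are unchanged. The Red chain can be built from Red-coloured Or gadgets by symmetry with \cref{fig:npb2clHardness}, which preserves the analogous five-basis-vertex corollary.

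Then case analysis: if the formula is satisfiable, Blue propagates the signal through the circuit and flips the goal, activating the Red chain and forcing Red to play many more moves than Blue; Blue is stuck first and therefore wins under misère. If the formula is unsatisfiable, the goal remains unflipped and the Red chain stays latent, so Blue's $2k$ extras make Blue out-move Red; Red is stuck first and wins. In both cases the PosCNF winner matches the MPB2CL winner, provided $k$ is chosen larger than the fixed circuit, variable, and extras counts. PSPACE membership is the usual alternating polynomial-space algorithm, since the game has polynomial depth and polynomially many choices per position.

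The main obstacle is showing that both players' misère-optimal strategies still drive the game to the outcome predicted by satisfiability. Blue is willing to flip the goal under misère, because activating Red's chain accelerates Blue's own exhaustion. Red cannot directly block goal activation (the goal is a Blue edge) and can only influence propagation through variable choices, which coincides with the False player's PosCNF strategy. The rigidity of the BCL construction — every eventually-flippable edge must be flipped under any legal play that does not end the game earlier — then pins the move counts to formula satisfiability up to small constants dominated by $k$.
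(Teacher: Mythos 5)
Your first modification matches the paper's: recolour the $2k$ chain behind the Blue goal so that flipping the goal opens up $2k$ extra moves for Red, which under mis\`ere is now a liability for Red and hence an incentive for Blue. But your second modification --- \emph{replacing} the $2k$ free Red plays with $2k$ free Blue plays --- breaks the reduction. Those Red extras are not just a counting device; they are what keeps Red alive during the middle game. Once all variables have been claimed, Red's only other potential moves are the post-goal chain, which is locked until Blue has propagated a signal through the entire circuit. In your construction Red therefore has \emph{no} legal move immediately after the variable phase, while Blue still has the whole circuit plus $2k$ extras in front of them. Under mis\`ere a stuck player wins, so Red wins on the spot regardless of whether the formula is satisfiable. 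Your own analysis of the unsatisfiable case (``Red is stuck first and wins'') applies verbatim to the satisfiable case as well, because Red gets stuck long before the chain ever opens. Lengthening the Red chain, as you propose as a fallback, cannot fix this: the chain is gated behind the goal and provides Red nothing beforehand.

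The paper's construction keeps Red's original $2k$ free plays \emph{and} adds a separate, always-available $2k$ chain for Blue. Then both players burn through the variables, Blue spends roughly $k$ turns on the circuitry while Red burns extras (leaving Red about $k$), and the endgame comparison is Blue's $2k$ against Red's $k$ plus, conditionally, the $2k$ unlocked by the goal: $2k < 3k$ when the formula is satisfiable (Blue exhausts first and wins), and $2k + \text{leftover circuitry} \geq 2k > k$ otherwise (Red exhausts first and wins). You need both reservoirs simultaneously --- one so Red can keep pace, one so Blue has a fixed baseline to be compared against --- not a swap of one for the other. The PSPACE membership remark and the incentive discussion (why Blue still wants to flip the goal under mis\`ere) are fine.
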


\begin{proof}
    We use the same construction as in lemma \ref{lem:npbcl}, except that
    instead of the $2k$ long chain for Blue at the end of the circuitry, those
    chain edges are now Red.  We provide a separate chain of $2k$ moves for
    Blue.  The following prescribed order of the game is best for both players.

    Since this is mis\`ere, both players are looking to finish up their moves
    as quickly as possible.  The first player who can't move will win.

    Both players will first select all variables.  This is because Blue wants
    to open up the new $2k$ long chain for Red at the end of their circuitry
    section.  Red has the opposite goal, so they will also want to play on the
    variables first.  Afterwards, Blue will move through the circuitry to
    finish that before Red completes their original batch of $2k$ moves.  If
    Blue is able to activate the end of the circuit, then Red has an
    additional $2k$ moves.  Red had $k$ left from their original $2k$.  Red's total of $3k$ is greater than Blue's $2k$, so Blue will run out before Red and will win.  If Blue is not
    able to activate the end of the circuit, then Blue will still have $2k$
    moves plus whatever remains in the circuitry and Red will win by finishing
    their original $2k$ moves first.
\end{proof}

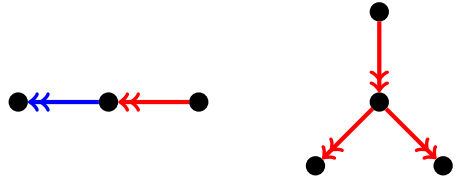
\begin{figure}[ht]
    \begin{center}
    \begin{tikzpicture}[node distance = 1.2cm, minimum size = .2cm, inner sep = .07cm, ultra thick]
        \node[circle, draw, fill=black] (a) {};
        \node[circle, draw, fill=black] (b) [right of=a] {};
        \node[circle, draw, fill=black] (c) [right of=b] {};
        \node[] (d) [right of=c] {};
        \node[circle, draw, fill=black] (e) [right of=d] {};
        \node[circle, draw, fill=black] (f) [below left of=e] {};
        \node[circle, draw, fill=black] (g) [below right of=e] {};
        \node[circle, draw, fill=black] (h) [above of=e] {};

        \path[->]
        ;
        \path[->>]
            (b) edge [color = blue] (a)
            (c) edge [color = red] (b)
            (e) edge [color = red] (f)
            (e) edge [color = red] (g)
            (h) edge [color = red] (e)
        ;
        \path[]
        ;

    \end{tikzpicture}
    \caption{
        New basis vertices needed for the reduction: Blue-to-Red (left) and Red-Or (right).
    }
    \label{fig:newBasisVertices}
    \end{center}
\end{figure}

\begin{corollary}
    \cclass{PSPACE}-hard instances of \textsl{\rsMPBCL{}} can be constructed
    from the five basis vertices as in \textsl{\rsBCL{}} plus two additional vertices: \emph{Blue-to-Red} and \emph{Red-Or}, as shown in \cref{fig:newBasisVertices}.
\end{corollary}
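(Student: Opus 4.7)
The plan is to adapt the construction from the preceding \rsMPBCL{} lemma so that it uses only the five \rsBCL{} basis vertices together with the Blue-to-Red and Red-Or vertices shown in \cref{fig:newBasisVertices}. Starting from the \rsNPBCL{} corollary's reduction — which already gives the variable row and the all-blue circuitry (up to what was the Blue goal) built from the five basis vertices — three further pieces are needed: a disjoint chain of $2k$ Blue moves, a chain of $2k$ Red moves that becomes playable only when the circuit output fires, and the interface between them.

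First, I would construct the separate chain of $2k$ always-available moves for Blue exactly as in the \rsNPBCL{} corollary: a cascade of $2k$ Or gadgets, each of whose output feeds the input of the next. This component is disconnected from everything else and uses only the original five basis vertices, so it imposes no new demands on the gadget set.

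Next, in place of what was the Blue goal edge, I would attach a Blue-to-Red vertex so that the final blue activation Blue performs is consumed there and is passed along as a flippable red edge. From that red edge I would extend a cascade of $2k$ Red-Or gadgets; because the Red-Or gadget is structurally the Or gadget with colours swapped, the Red cascade mirrors the Blue Or cascade above and supplies Red with exactly the $2k$-move reserve required by the previous lemma's counting argument. Since each Red-Or's outgoing red edge is blocked behind the one before it, Red cannot begin walking down the chain until Blue has completed the circuit and fired the Blue-to-Red vertex.

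The main obstacle will be verifying the correctness of this new interface together with the Red-Or cascade at the level of the in-weight-$2$ constraints. Specifically, I need to check that Red cannot prematurely flip into the red chain while the Blue circuit is still undecided (which would hand Red free moves and break the move counts), that Blue cannot encroach on the red chain to harvest extra moves of their own, and that once Blue does activate the Blue-to-Red vertex the entire $2k$-edge red chain becomes sequentially flippable by Red in the expected order. Once these local consistency checks pass, the prescribed play order and the move-count analysis from the \rsMPBCL{} lemma apply verbatim, and the resulting hard instance is built solely from the seven allowed vertex types.
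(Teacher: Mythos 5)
Your proposal matches the paper's argument: take the \rsNPBCL{} construction built from the five basis vertices, replace the tail after the former Blue goal with a Blue-to-Red interface feeding a cascade of $2k$ Red-Or gadgets, and add the disjoint $2k$-move Blue reserve, exactly as in \cref{fig:mpb2clHardness}. The paper's own proof is terser and likewise defers the local in-weight checks, so your version is essentially the same approach with more explicit detail.
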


\begin{proof}

    Our reduction is the same as for \textsl{\rsNPBCL{}}, except for two changes that necessitate the new vertices: (1) the blue weight-2 edge that, if flipped, allows a red weight-2 edge to be flipped, and (2) a red version of the Or vertex.  These two form the new tail that trails from where the original Blue goal edge stood.  
\end{proof}

\begin{figure}[ht]
    \begin{center}
    \begin{tikzpicture}[node distance = 1.2cm, minimum size = .2cm, inner sep = .07cm, ultra thick]
        \node[circle, draw, fill=black] (a) at (0, 0) [label=below:{a}] {};
        \node[circle, draw, fill=black] (c) [above right of=a, label=below:{c}] {};
        \node[circle, draw, fill=black] (b) [below right of=c, label=below:{b}] {};
        \node[circle, draw, fill=black] (top) [above of=c] {};
        \node[circle, draw, fill=black] (ab) [right of=b, label=below:{a}] {};
        \node[circle, draw, fill=black] (cb) [above right of=ab, label=below:{c}] {};
        \node[circle, draw, fill=black] (bb) [below right of=cb, label=below:{b}] {};
        \node[circle, draw, fill=black] (kn1) [right of=cb] {};
        \node[circle, draw, fill=black] (k0) [right of=kn1] {};
        \node[circle, draw, fill=black] (k1) [right of=k0] {};
        \node[circle, draw, fill=black] (k1b) [below of=k1] {};
        \node[circle, draw, fill=black] (k2) [right of=k1] {};
        \node[circle, draw, fill=black] (k2b) [below of=k2] {};
        \node[circle, draw, fill=black] (k3) [right of=k2] {};
        \node[circle, draw, fill=black] (k3b) [below of=k3] {};
        \node[circle, draw, fill=black] (k4) [right of=k3] {};
        \node[circle, draw, fill=black] (k4b) [below of=k4] {};

        \path[->]
            (c) edge [color = blue] (a)
            (c) edge [color = blue] (b)
            (cb) edge [color = blue] (ab)
            (cb) edge [color = blue] (bb)
        ;
        \path[->>]
            (top) edge [color = blue] node [left] {Blue goal\phantom{x}} (c)
            (kn1) edge [color = blue] (cb)
            (k0) edge [color = blue] (kn1)
            (k1) edge [color = red] (k0)
            (k1) edge [color = red] (k1b)
            (k2) edge [color = red] (k1)
            (k2) edge [color = red] (k2b)
            (k3) edge [color = red] (k3b)
            (k4) edge [color = red] (k3)
            (k4) edge [color = red] (k4b)
        ;
        \path[]
            (k3) edge [draw=none] node {$\cdots$} (k2)
        ;
        \draw [decorate, decoration = {brace, amplitude=10pt}] (6.05, 1.4) --  (11.1, 1.4) node [pos=0.5,above=10pt,black]{$2k$};

    \end{tikzpicture}
    \caption{
        Modification of the Blue goal from the reduction to reach \rsMPBCL{}.  On the left side is the goal part of the original \rsBCL{} and \rsBBBCL{} reductions.  On the right side is the modification using additional gadgets that gives Red $2k$ moves if Blue would have met the goal originally.
    }
    \label{fig:mpb2clHardness}
    \end{center}
\end{figure}
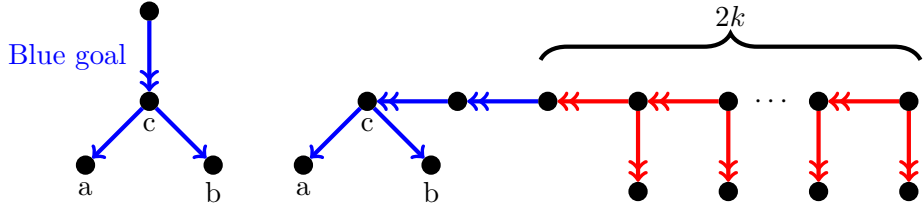

Although the use of these two additional vertices disappointingly forces future reductions to include two extra gadgets, these gadgets are hopefully relatively easy to concoct.  The Blue-to-Red vertex is very similar to the Variable vertex
and the Red-Or is literally the same thing as Or, but with Red edges instead of Blue.


\section{General Graph Reduction Details}
\label{sec:reduction}

Here, we show the gadgets used in game reduction.  We reduce from standard \rsBCL{}, so we need to prove the following items:

\begin{itemize}
    \item There exist gadgets that act like each of the Variable, And, Or, Split, and Choice vertices.
    \item We need a Goal gadget that Blue wins with if they can activate it while the rest of the construction remains consistent.
    \item The gadgets need to use a common interface between them that appropriately propagate active and inactive signals.
    \item Red has some way to make their $k$ extra moves while Blue is making moves on the circuitry.
\end{itemize}

We consider the game on general graphs, as well as when restricted to the Cartesian grid and triangular
grid.  For those stricter results, we will have some additional requirements:

\begin{itemize}
    \item The gadgets must ``snap'' onto the grid structures, meaning the graphs only include a subset of those grids in each case.
    \item We need additional wire gadgets to connect the interfaces between gadgets.
\end{itemize}

First, in Figure \ref{fig:input} we show the gadget (input/output) interface, on both the triangular and Cartesian  grids.  The I stands for Inactive; the A stands for Active.
\begin{figure}[ht]
    \centering
     \begin{tikzpicture} [every node/.style={draw,circle,inner sep=1pt,minimum size=1pt, very thick,fill=black}]
        \node (a) at (-3,0){};
        \node (b) at (-4.5,0){};
        \node (c) at (-1.5,0){};
        \node (d) at (-3,1.5){};
        \node (r) at (-3.75,-.75){};
        \node (s) at (-2.25, -.75){};
        \draw[blue] (a) edge node[below=.5mm, draw=none, fill=none, text=black]{I} (b) edge node[below, draw=none, fill=none, text=black]{A} (c) edge (d);
        \draw[red] (r) edge (s);
        \node (aa) at (3,0){};
        \node (bb) at (1.5,0){};
        \node (cc) at (4.5,0){};
        \node (dd) at (2.25,1){};
        \node (rr) at (3.75,-.75){};
        \node (ss) at (2.25, -.75){};
        \draw[blue] (aa) edge node[below=.5mm, draw=none, fill=none, text=black]{I} (bb) edge node[below, draw=none, fill=none, text=black]{A} (cc) edge (dd);
        \draw[red] (rr) edge (ss);
    \end{tikzpicture}
    \caption{INPUT/OUTPUT interface for Cartesian and Triangular grids}
     \label{fig:input}
\end{figure}
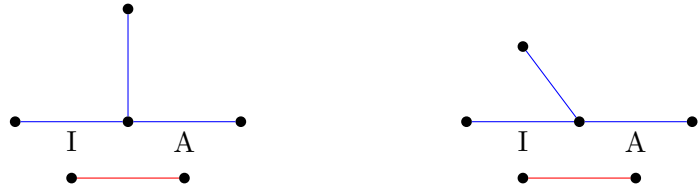

Of the four vertices in the interface gadget, the top and middle-bottom will never be incident to any other edges and I and A will never be connected to any red edges.  Thus, it never benefits Blue to choose the top edge over I or A, as choosing either the I or A edge instead may remove more blue edges and will not remove any extra red edges.  If Blue plays on the A edge of any interface, that will correlate to an active signal between those gadgets; if they play on the I edge, this simulates an inactive signal.

In \rsBCL{}, Blue never prefers inactive connections, so we design the gadgets so that Blue will choose to play on A edges (instead of I) any time it will not cost them an additional move.  The red edge below is included so Red has their own move to make while Blue plays on the gadget.  

In Figure \ref{fig:goal} we show the singular goal gadget, which will Blue to win the game if they are able to activate its input and not make any extra moves on any gadgets.

\begin{figure}[ht]
    \centering
     \begin{tikzpicture} [every node/.style={draw,circle,inner sep=1pt,minimum size=1pt, very thick,fill=black}]
        \node (a) at (-3,0){};
        \node (b) at (-4.5,0){};
        \node (c) at (-1.5,0){};
        \node (d) at (-3,1.5){};
        \node (e) at (0,0) {};
        \node (r) at (-3.75,-.75){};
        \node (s) at (-2.25, -.75){};
        \draw[blue]
            (a) edge node[below=.5mm, draw=none, fill=none, text=black]{I} (b) edge node[below, draw=none, fill=none, text=black]{A} (c) edge (d)
            (c) edge node[below, draw=none, fill=none, text=black]{G} (e);
        \draw[red]
            (r) edge (s);
    \end{tikzpicture}
    \caption{Goal gadget.  If the input is activated, then Blue doesn't need to spend an extra turn to take the extra edge, $G$.}
     \label{fig:goal}
\end{figure}
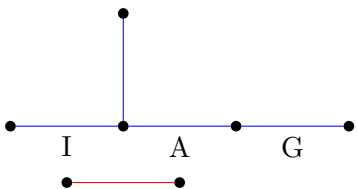

We show the beginning variables in the Cartesian and triangular grids in Figure \ref{fig:variable}. 
Identical logic follows in showing that both are positive formulas.

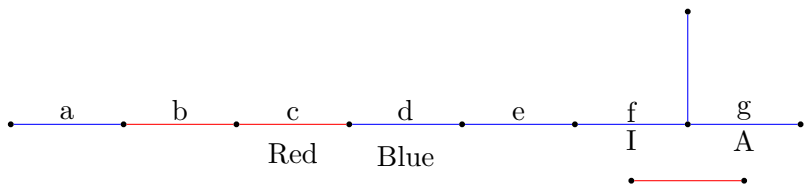
\begin{figure}
    \centering
    \begin{tikzpicture}[every node/.style={draw,circle,inner sep=0pt,minimum size=1pt, very thick,fill=black}]
    \node (a) at (-6,0){};
    \node (b) at (-4.5,0){};
    \node (c) at (-3,0){};
    \node (d) at (-1.5,0){};
    \node (e) at (0,0){};
    \node (f) at (1.5,0){};
    \node (g) at (3,0){};
    \node (h) at (4.5,0){};
    \node (j) at (3.0,1.5){};
    \node (r) at (2.25,-.75){};
    \node (s) at (3.75,-.75){};
    \draw[blue]
    (a) edge node[above, draw=none, fill=none, text=black] {a} (b)
    (d) edge node[above, draw=none, fill=none, text=black] {d} node [below, draw=none, fill=none, text=black] {Blue} (e)
    (e) edge node[above, draw=none, fill=none, text=black] {e} (f)
    (f) edge node[above, draw=none, fill=none, text=black] {f} node[below=.5mm, draw=none, fill=none, text=black] {I} (g)
    (g) edge node[above, draw=none, fill=none, text=black] {g} node[below, draw=none, fill=none, text=black] {A} (h);
    \draw[red]
    (b) edge node[above, draw=none, fill=none, text=black] {b} (c)
    (c) edge node[above, draw=none, fill=none, text=black] {c} node[below, draw=none, fill=none, text=black] {Red}  (d);
    \draw[blue] (g) edge (j);
    \draw[red] (r) edge (s);
    \end{tikzpicture}
    \caption{VARIABLE gadget on Cartesian grids.  The triangular grid version is the same, but with the vertical edge in the output gadget bent to either the left or right.  If Red goes first, their best move is to play on edge c.  If Blue goes first, they should choose to play on d.  In addition to all these edges, each pair of Variable gadgets is accompanied by a single Red edge, different from the one pictured.} 
    \label{fig:variable}
\end{figure}

In the Variable gadget:
\begin{itemize}
    \item If Red goes first, they should choose to play on edge $c$.  This is because it leaves $a$ as a completely independent play that Blue has to make later.  Since Red will play on $c$ in half of these gadgets, these singleton Blue edges will exactly balance out the added Red edges.  Additionally, edge $e$ will remain, so Blue will have to continue on the gadget by playing $f$ instead of $g$.  Playing on $b$ is not as good for Red because Blue will (have to) play on $e$ and $g$.  Although that includes the extra move, it also means they get to activate the output.
    \item If Blue goes first, they should choose to play on edge $d$.  This allows them to activate the output by choosing $g$ in a future turn.  For the separated path $ab$, both players want to avoid this component and play to force the other player to move there instead.  There will be enough other Blue moves that these will constitute the last of Red's turns.  If Blue played on $e$ instead of $d$, then Red could take the $c$ pieces instead, forcing Blue to take more turns on the $a$ edges.
\end{itemize}

If we have wires of even and odd length, then we can always arrange them to connect our gadgets on both of the grids that we're targeting, as we can include them until the variables align.  We get these wires in Figures \ref{fig:EWire} (even) and \ref{fig:OWire} (odd).

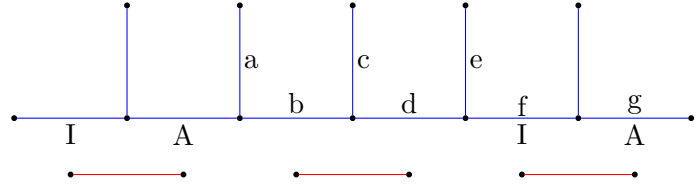
\begin{figure}[ht]
    \centering
    \begin{tikzpicture}[every node/.style={draw,circle,inner sep=0pt,minimum size=1pt, very thick,fill=black}]
    \node (a) at (0,0){};
    \node (b) at (1.5,0){};
    \node (c) at (3,0){};
    \node (d) at (4.5,0){};
    \node (e) at (6,0){};
    \node (f) at (7.5,0){};
    \node (g) at (9,0){};
    \node (h) at (1.5,1.5){};
    \node (i) at (3,1.5){};
    \node (j) at (4.5,1.5){};
    \node (k) at (6,1.5){};
    \node (l) at (7.5,1.5){};
     \node (r) at (.75,-.75){};
    \node (s) at (2.25,-.75){};
    \node (rr) at (3.75,-.75){};
    \node (ss) at (5.25,-.75){};
    \node (rrr) at (6.75,-.75){};
    \node (sss) at (8.25,-.75){};
    \draw[red] (r) edge (s);
    \draw[red] (rr) edge (ss);
    \draw[red] (rrr) edge (sss);
    \draw[blue] (b) edge node[below=.5mm, draw=none, fill=none, text=black] {I} (a) edge node[below, draw=none, fill=none, text=black] {A} (c) edge (h);
    \draw[blue] (c) edge node[right, draw=none, fill=none, text=black] {a}(i) edge node[above, draw=none, fill=none, text=black] {b} (d);
    \draw[blue] (d) edge node[right, draw=none, fill=none, text=black] {c} (j) edge node[above, draw=none, fill=none, text=black] {d}(e);
    \draw[blue] (e) edge node[right, draw=none, fill=none, text=black] {e} (k) edge node[above, draw=none, fill=none, text=black] {f} node[below=.5mm, draw=none, fill=none, text=black] {I} (f);
    \draw[blue] (f) edge (l) edge node[above, draw=none, fill=none, text=black] {g} node[below, draw=none, fill=none, text=black] {A} (g);
    \end{tikzpicture}
    \caption{Wire of Even Length.}
     \label{fig:EWire}
\end{figure}

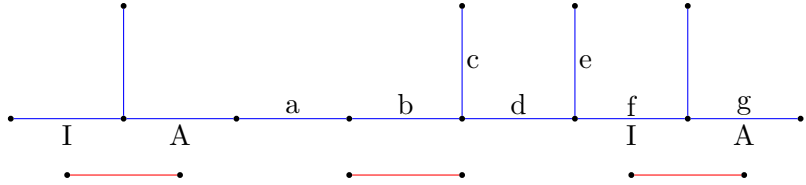
\begin{figure}[ht]
    \centering
    \begin{tikzpicture}[every node/.style={draw,circle,inner sep=0pt,minimum size=1pt, very thick,fill=black}]
    \node (a) at (0,0){};
    \node (b) at (1.5,0){};
    \node (c) at (3,0){};
    \node (d) at (4.5,0){};
    \node (e) at (6,0){};
    \node (f) at (7.5,0){};
    \node (g) at (9,0){};
    \node (h) at (10.5,0){};
    \node (i) at (1.5,1.5){};
    \node (j) at (6,1.5){};
    \node (k) at (7.5,1.5){};
    \node (l) at (9,1.5){};
    \node (r) at (.75,-.75){};
    \node (s) at (2.25,-.75){};
    \node (rr) at (4.5,-.75){};
    \node (ss) at (6,-.75){};
    \node (rrr) at (9.75,-.75){};
    \node (sss) at (8.25,-.75){};
    \draw[red] (r) edge (s);
    \draw[red] (rr) edge (ss);
     \draw[red] (rrr) edge (sss);
    \draw[blue] (b) edge node[below=.5mm, draw=none, fill=none, text=black] {I} (a)
        edge node[below, draw=none, fill=none, text=black] {A} (c) edge (i);
    \draw[blue] (d) edge node[above, draw=none, fill=none, text=black] {a} (c)
        edge node[above, draw=none, fill=none, text=black] {b} (e);
    \draw[blue] (e) edge node[above, draw=none, fill=none, text=black] {d} (f)
        edge node[right, draw=none, fill=none, text=black] {c}(j);
    \draw[blue] (f) edge node[right, draw=none, fill=none, text=black] {e}(k);
    \draw[blue] (g) edge node[above, draw=none, fill=none, text=black] {f} node[below=.5mm, draw=none, fill=none, text=black] {I} (f)
        edge node[above, draw=none, fill=none, text=black] {g} node[below, draw=none, fill=none, text=black] {A} (h) edge (l);
    \end{tikzpicture}
    \caption{Wire of Odd Length.}
     \label{fig:OWire}
\end{figure}

The argument is identical for the even and odd cases. With Active input, $a$ is
cleared so Blue selects $d$ which then allows them to select $g$ as the
output. With Inactive input, Blue must clear $a$, so they first select $b$. Then,
they must still clear $e$, so they select $f$, leading to an Inactive output.

\subsection{General Graph Gadgets}

We now outline the reduction gadgets used on a general graph. First, consider the AND gadget in Figure \ref{fig:and}.  In this gadget, Blue can only activate the output gadget if both input gadgets are active without playing additional moves.

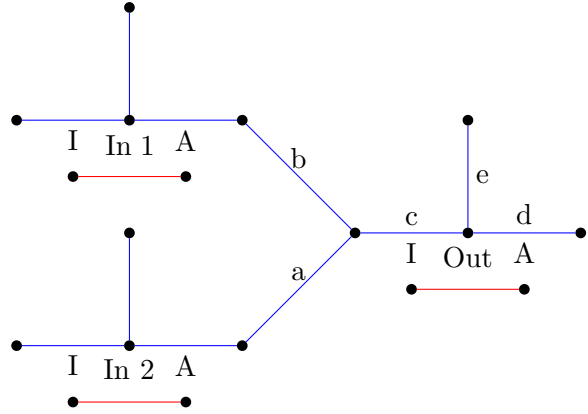
\begin{figure}[!ht]
    \centering
    \begin{tikzpicture}[every node/.style={draw,circle,inner sep=1pt,minimum size=1pt, very thick,fill=black}]
    \node (a) at (0,0){};
    \node (b) at (1.5,0){};
    \node (c) at (3,0){};
    \node (d) at (1.5,1.5){};
    \node (e) at (0,3){};
    \node (f) at (1.5,3){};
    \node (g) at (3,3){};
    \node (h) at (1.5,4.5){};
    \node (i) at (4.5,1.5){};
    \node (j) at (6,1.5){};
    \node (k) at (7.5,1.5){};
    \node (l) at (6,3){};
    \node (r) at (.75,-.75){};
    \node (s) at (2.25,-.75){};
    \node (rr) at (.75,2.25){};
    \node (ss) at (2.25,2.25){};
    \node (rrr) at (5.25,.75){};
    \node (sss) at (6.75,.75){};
    \draw[red] (r) edge node [above, draw=none, fill=none, text=black] {In 2} (s);
    \draw[red] (rr) edge node [above, draw=none, fill=none, text=black] {In 1} (ss);
    \draw[red] (rrr) edge node [above, draw=none, fill=none, text=black] {Out} (sss);
    \draw[blue] (b) edge node[below=.5mm, draw=none, fill=none, text=black] {I} (a)
        edge (d) edge node[below, draw=none, fill=none, text=black] {A} (c);
    \draw[blue] (i) edge node[above, draw=none, fill=none, text=black] {a} (c)
        edge node[above, draw=none, fill=none, text=black] {b} (g) edge
        node[below=.5mm, draw=none, fill=none, text=black] {I} node[above,
        draw=none, fill=none, text=black] {c} (j);
    \draw[blue] (f) edge node[below=.5mm, draw=none, fill=none, text=black] {I} (e)
        edge node[below, draw=none, fill=none, text=black] {A} (g) edge (h);
    \draw[blue] (j) edge node[above, draw=none, fill=none, text=black] {d}
        node[below, draw=none, fill=none, text=black] {A} (k) edge node[right,
        draw=none, fill=none, text=black] {e} (l);
    \end{tikzpicture}
    \caption{AND gadget.  The output (Out) can be activated only if both inputs (In 1 and In 2) are active.}
    \label{fig:and}
\end{figure}

    \begin{enumerate}
        \item If the input is $(A,A)$, then both edge $a$ and $b$ vanish,
            leaving only edges $c,d,e$. Blue then takes $d$,
            giving an active output;
        \item If the input is $(I,A)$, then the edges $b,c,d,e$ remain. Blue then would take $c$, giving an inactive output;
        \item If the input is $(A,I)$, the output is inactive due to similar
            reasoning as in Case 2;
        \item If the input is $(I,I)$, then edges $a,b,c,d,e$ will remain, in
            which case Blue can take them all by claiming edge $c$, giving an
            inactive output.
    \end{enumerate}

Next, consider the OR gadget in Figure \ref{fig:or}, with two inputs and one output.  The output can be activated if either of the two inputs is active, but if both are inactive then the output must also be inactive.  Blue will make an extra move on the gadget outside of the three interfaces, so an additional red edge is included.

\begin{figure}[ht]
    \centering
    \begin{tikzpicture}[every node/.style={draw,circle,inner sep=1pt,minimum
        size=1pt, very thick,fill=black}]
    \node (a) at (0,0){};
    \node (b) at (1.5,0){};
    \node (c) at (3,0){};
    \node (d) at (1.5,1.5){};
    \node (e) at (0,3){};
    \node (f) at (1.5,3){};
    \node (g) at (3,3){};
    \node (h) at (1.5,4.5){};
    \node (i) at (4.5,1.5){};
    \node (j) at (7.5,1.5){};
    \node (k) at (9,1.5){};
    \node (l) at (6,1.5){};
    \node (m) at (10.5,1.5){};
    \node (o) at (9,3){};
    \node (r) at (.75,-.75){};
    \node (s) at (2.25,-.75){};
    \node (rr) at (.75,2.25){};
    \node (ss) at (2.25,2.25){};
    \node (rrrr) at (6,.75){};
    \node (ssss) at (4.5,.75){};
    \node (rrr) at (9.75,.75){};
    \node (sss) at (8.25,.75){};
    \draw[red] (r) edge (s);
    \draw[red] (rr) edge (ss);
    \draw[red] (rrr) edge (sss);
    \draw[red] (rrrr) edge (ssss);
    \draw[blue] (f) edge node[below=.5mm, draw=none, fill=none, text=black] {I} (e)
        edge node[below, draw=none, fill=none, text=black] {A} (g) edge (h);
    \draw[blue] (b) edge node[below=.5mm, draw=none, fill=none, text=black] {I} (a)
        edge node[below, draw=none, fill=none, text=black] {A} (c) edge (d);
    \draw[blue] (g) edge node[right, draw=none, fill=none, text=black]{a} (c)
        edge node[above, draw=none, fill=none, text=black] {b} (i);
    \draw[blue] (i) edge node[above, draw=none, fill=none, text=black] {c} (c) edge node[above, draw=none, fill=none, text=black]{d} (l);
    \draw[blue] (l) edge node[above, draw=none, fill=none, text=black]{e} (j);
    \draw[blue] (k) edge node[above, draw=none, fill=none, text=black] {f} node[below=.5mm, draw=none, fill=none, text=black] {I} (j)
        edge node[above, draw=none, fill=none, text=black] {g} node[below, draw=none, fill=none, text=black] {A} (m) edge (o);
    \end{tikzpicture}
    \caption{OR gadget.}
     \label{fig:or}
\end{figure}
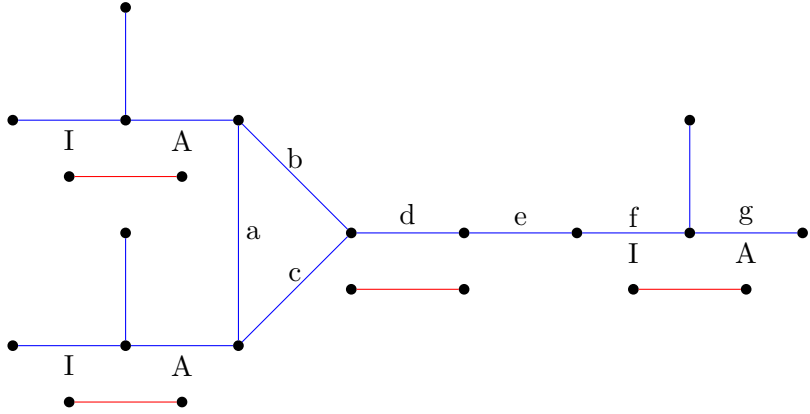
  \begin{enumerate}
         \item If the input is $(A,A)$, edges $a,b$ and $c$ vanish. Blue should take $e$ in order to clear $d$. They then select $g$, giving an active output;
        \item If the input is $(I,A)$, $a$ and $c$ vanish. Blue would then select $d$, which clears $b$ and $e$. They then select $g$, resulting in an active output;
        \item If the input is $(A,I)$, the output is active due to similar reasoning in Case 2;
        \item If the input is $(I,I)$, then edges $a,b,$ and $c$ remain. Blue would then select $b$ or $c$, which clears $a,b,c,$ and $d$. In order to clear $e$, Blue must choose $f$, resulting in an inactive output.
    \end{enumerate}
Following this, we consider the FANOUT gadget in Figure \ref{fig:fanout}:
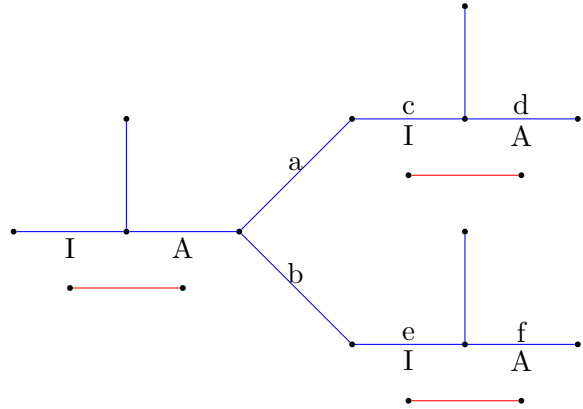
\begin{figure}[ht]
    \centering
    \begin{tikzpicture}[every node/.style={draw,circle,inner sep=0pt,minimum
        size=1pt, very thick,fill=black}]
    \node (a) at (0,1.5){};
    \node (b) at (1.5,1.5){};
    \node (c) at (3,1.5){};
    \node (d) at (1.5,3){};
    \node (e) at (4.5,0){};
    \node (f) at (6,0){};
    \node (g) at (7.5,0){};
    \node (h) at (6,1.5){};
    \node (i) at (4.5,3){};
    \node (j) at (6,3){};
    \node (k) at (7.5,3){};
    \node (l) at (6,4.5){};
    \node (r) at (.75,.75){};
    \node (s) at (2.25,.75){};
    \node (rr) at (6.75,2.25){};
    \node (ss) at (5.25,2.25){};
    \node (rrr) at (5.25,-.75){};
    \node (sss) at (6.75,-.75){};
    \draw[red] (r) edge (s);
    \draw[red] (rr) edge (ss);
    \draw[red] (rrr) edge (sss);
    \draw[blue] (b) edge node[below=.5mm, draw=none, fill=none, text=black] {I} (a)
        edge node[below, draw=none, fill=none, text=black] {A} (c) edge (d);
    \draw[blue] (c) edge node[above, draw=none, fill=none, text=black] {a} (i)
        edge node[above, draw=none, fill=none, text=black] {b}(e);
    \draw[blue] (j) edge node[above, draw=none, fill=none, text=black] {c} node[below=.5mm, draw=none, fill=none, text=black] {I} (i)
        edge node[above, draw=none, fill=none, text=black] {d} node[below, draw=none, fill=none, text=black] {A} (k) edge (l);
    \draw[blue] (f) edge node[above, draw=none, fill=none, text=black] {e} node[below=.5mm, draw=none, fill=none, text=black] {I} (e)
        edge node[above, draw=none, fill=none, text=black] {f} node[below, draw=none, fill=none, text=black] {A} (g) edge (h);
    \end{tikzpicture}
    \caption{FANOUT gadget.}
        \label{fig:fanout}
\end{figure}
  \begin{enumerate}
        \item If the input is $A$, then both edge $a$ and $b$ vanish,
            allowing Blue to take both $d$ and $f$, resulting in two active outputs;
        \item If the input is $I$, then the edges $a$ and $b$ remain,
            forcing Blue to select $c$ and $e$, leading to an inactive output for both branches.
    \end{enumerate}

To conclude our analysis of the general case, we consider the CHOICE gadget in Figure \ref{fig:choice}.

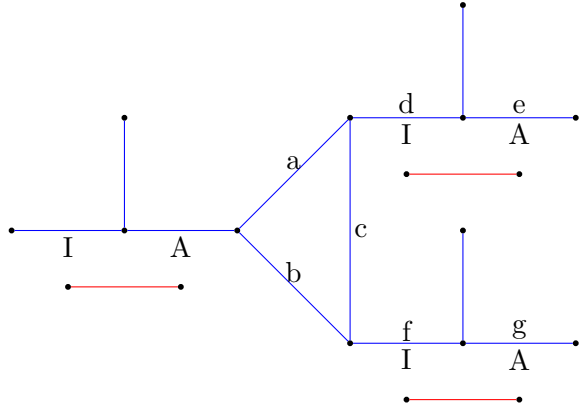
\begin{figure}
    \centering
    \begin{tikzpicture}[every node/.style={draw,circle,inner sep=0pt,minimum size=1pt, very thick,fill=black}]
    \node (a) at (0,1.5){};
    \node (b) at (1.5,1.5){};
    \node (c) at (3,1.5){};
    \node (d) at (1.5,3){};
    \node (e) at (4.5,0){};
    \node (f) at (6,0){};
    \node (g) at (7.5,0){};
    \node (h) at (6,1.5){};
    \node (i) at (4.5,3){};
    \node (j) at (6,3){};
    \node (k) at (7.5,3){};
    \node (l) at (6,4.5){};
    \node (r) at (.75,.75){};
    \node (s) at (2.25,.75){};
    \node (rr) at (6.75,2.25){};
    \node (ss) at (5.25,2.25){};
    \node (rrr) at (5.25,-.75){};
    \node (sss) at (6.75,-.75){};
    \draw[red] (r) edge (s);
    \draw[red] (rr) edge (ss);
    \draw[red] (rrr) edge (sss);
    \draw[blue] (b) edge node[below=.5mm, draw=none, fill=none, text=black] {I} (a)
        edge node[below, draw=none, fill=none, text=black] {A} (c) edge (d);
    \draw[blue] (c) edge node[above, draw=none, fill=none, text=black] {a} (i)
        edge node[above, draw=none, fill=none, text=black] {b} (e);
    \draw[blue] (j) edge node[above, draw=none, fill=none, text=black] {d} node[below=.5mm, draw=none, fill=none, text=black] {I} (i)
        edge node[above, draw=none, fill=none, text=black] {e} node[below, draw=none, fill=none, text=black] {A} (k) edge (l);
    \draw[blue] (f) edge node[above, draw=none, fill=none, text=black] {f} node[below=.5mm, draw=none, fill=none, text=black] {I} (e)
        edge node[above, draw=none, fill=none, text=black] {g} node[below, draw=none, fill=none, text=black] {A} (g) edge (h);
    \draw[blue] (e) edge node[right, draw=none, fill=none, text=black] {c} (i);
    \end{tikzpicture}
    \caption{CHOICE gadget.}
    \label{fig:choice}
\end{figure}
    \begin{enumerate}
        \item If the input is $A$, then both edge $a$ and $b$ vanish. Blue
            still needs to remove edge $c$, so they select inactive as one
            output, and active as the other;
        \item If the input is $I$, then the edges $a$ and $b$ remain,
            forcing Blue to take an inactive output for both branches.
    \end{enumerate}

\begin{theorem}
    \rsMisParck{} is \cclass{PSPACE}-complete.
\end{theorem}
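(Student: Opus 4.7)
The plan is to prove \cclass{PSPACE}-membership by the standard observation that each move strictly decreases the edge count, so the game tree has depth bounded by the number of edges and a recursive minimax algorithm uses only polynomial space. The interesting direction is \cclass{PSPACE}-hardness.

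For hardness, I would reduce from \rsBCL{}. Given an \rsBCL{} instance assembled from the five basis vertices with its Blue goal edge, I construct the arc-Kayles position by replacing each basis vertex with its arc-Kayles gadget given above, gluing copies together at their shared interfaces, inserting wire gadgets of matching parity so that every interface ``snaps'' into place, and capping the Blue goal wire with the Goal gadget. The isolated red edges included in each interface collectively play the role of the $k$ extra red plays from the \rsBCL{} reduction, giving Red moves to make while Blue propagates signals through the circuitry.

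The per-gadget correctness has already been checked case by case: in every gadget Blue's choice of the A edge over the I edge at an interface corresponds exactly to propagating an active signal, and Blue can activate a gadget's output edge if and only if the corresponding logical condition holds on its inputs. Consequently, the only strategic question for Blue reduces to which variables to activate, which is exactly the \rsBCL{} problem.

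The main obstacle, and in my view the heart of the proof, is the mis\`ere move-counting argument translating between ``Blue reaches the Goal'' in \rsBCL{} and ``Blue runs out of moves first'' (the mis\`ere winning condition) in arc Kayles. The Goal gadget is built so that an active input saves Blue exactly one edge $G$, while every other gadget is balanced to consume the same number of Blue and Red moves regardless of which signal is propagated. I would carry out a careful global tally: sum the Blue and Red moves across all gadgets, interfaces, and balancing red edges, and verify that (a) whenever Blue can activate the Goal, Blue's total move count is strictly less than Red's, so Blue exhausts first and wins under mis\`ere, and (b) any deviation by Blue---taking an I edge when an A edge is available, or failing to activate the Goal---costs at least one extra Blue move and tilts the count back to Red's favour. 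Once this accounting is verified uniformly across the gadget library, the theorem follows from the \cclass{PSPACE}-hardness of \rsBCL{}.
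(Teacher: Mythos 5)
Your proposal follows essentially the same route as the paper: membership via the edge-bounded game tree, and hardness by reducing from \rsBCL{} using the Variable, And, Or, Fanout, Choice, wire, and Goal gadgets with their interface red edges supplying Red's balancing moves. The global mis\`ere move-counting you identify as the heart of the argument is exactly what the paper's gadget-by-gadget analysis is designed to support (each gadget consumes a fixed number of Blue moves regardless of signal, with the Goal gadget saving Blue one move on an active input), so your plan and the paper's proof coincide.
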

\begin{proof}
    By the reduction from \rsBCL{} with the gadgets described above, \rsMisParck{} is \cclass{PSPACE}-hard.  \rsMisParck{} is also in \cclass{PSPACE} because the game tree's height is bounded by the number of edges and each position has at most that many options.  Thus, \rsMisParck{} is \cclass{PSPACE}-complete.
\end{proof}

Some of the gadgets from above need to be modified slightly in order to fit on either the triangular or Cartesian grids.  We describe these gadgets in the appendix in Section \ref{sec:specificGadgets}.

\section{Future Directions}

This work is the first result the authors know of that uses Constraint Logic show that mis\`ere combinatorial games are \cclass{PSPACE}-hard.  Naturally, this begs the question of whether this approach can be used for other mis\`ere games as well.  Additionally, although we did not use the mis\`ere version of Constraint Logic we devised (\rsMPBCL{}) in our reduction, it's also possible that that new ruleset could be used for future hardness results as well.  That would be even more useful if the number of basis vertices needed for hardness could be decreased.

\begin{problem}
    Is there a set of basis vertices with fewer than seven elements that \rsMPBCL{} is \cclass{PSPACE}-complete on, even for planar graphs?
\end{problem}

Of course, the big questions that motivated this are still unsolved.  Recent advances in the normal-play version of \ruleset{PArcK} have shown that the game is computationally intractable, specifically \cclass{NP}-hard \cite{HANAKA2026103716}.  It remains unknown whether the game is \cclass{PSPACE}-complete.

\begin{problem}
    What is the computational complexity of \ruleset{Partizan Arc Kayles}?  What is it's complexity on planar graphs?
\end{problem}

This, in turn, was motivated by two other games that have unknown complexity, despite plenty of interest from combinatorial game researchers.

\begin{problem}
    What is the computational complexity of \ruleset{Arc Kayles}?
\end{problem}

\begin{problem}
    What is the computational complexity of \ruleset{Domineering}?
\end{problem}

Although our work does not appear to yield a new path to proving any of these three, we hope it will ignite more interest in these long-standing problems.

\section*{Acknowledgements}

The authors would like to thank the organizers of the Integers Conference 2025,
held in Athens, Georgia, for bringing them together and facilitating this
research.
This work was supported by JSPS KAKENHI Grant Number JP24KJ1232.

\bibliographystyle{plainurl}

\appendix

\section{Specific Gadgets for Cartesian and Triangular Grids}
\label{sec:specificGadgets}

In this section, we provide specific gadgets for Cartesian and Triangular grids when those need to differ from the versions shown previously.

\subsection{Cartesian grid gadgets}

Note first that the AND and FANOUT gadgets can be redrawn in the Cartesian grid
with no fundamental edits, thus the logic from the general case follows here. We show them in Figures \ref{fig:cart-and} and \ref{fig:cart-fanout} respectively.

\begin{figure}[h!]
        \centering
        \begin{minipage}{.45\textwidth}
        \centering
        \begin{tikzpicture}[every node/.style={draw,circle,inner sep=0pt,minimum size=1pt, very thick,fill=black}]
        \node (a) at (1,2){};
        \node (b) at (0,2){};
        \node (c) at (2,2){};
        \node (d) at (1,3){};
        \node (r) at (.5,1.5){};
        \node (s) at (1.5,1.5){};
        \node (aa) at (1,0){};
        \node (bb) at (0,0){};
        \node (cc) at (2,0){};
        \node (dd) at (1,1){};
        \node (rr) at (.5,-.5){};
        \node (ss) at (1,-.5){};
        \node (aaa) at (3,1){};
        \node (bbb) at (2,1){};
        \node (ccc) at (4,1){};
        \node (ddd) at (3,2){};
        \node (rrr) at (2.5,.5){};
        \node (sss) at (3.5,.5){};
        \draw[blue] (a) edge node[below=.5mm, draw=none, fill=none, text=black]{I}
            (b) edge node[below, draw=none, fill=none, text=black]{A} (c) edge
            (d);
        \draw[blue] (aa) edge node[below=.5mm, draw=none, fill=none, text=black]{I}
            (bb) edge node[below, draw=none, fill=none, text=black]{A} (cc)
            edge (dd);
        \draw[blue] (aaa) edge node[below=.5mm, draw=none, fill=none, text=black]{I}
            (bbb) edge node[below, draw=none, fill=none, text=black]{A} (ccc)
            edge (ddd);
        \draw[blue] (bbb) edge (c) edge (cc);
        \draw[red] (r) edge (s);
        \draw[red] (rr) edge (ss);
        \draw[red] (rrr) edge (sss);
  \end{tikzpicture}
  \caption{AND gadget for Cartesian grids.}
   \label{fig:cart-and}
        \end{minipage}\hfill
        \begin{minipage}{.45\textwidth}
    \centering
        \begin{tikzpicture}[every node/.style={draw,circle,inner sep=0pt,minimum
    size=1pt, very thick,fill=black}]
        \node (a) at (1,1){};
        \node (b) at (0,1){};
        \node (c) at (2,1){};
        \node (d) at (1,2){};
        \node (aa) at (3,0){};
        \node (bb) at (2,0){};
        \node (cc) at (4,0){};
        \node (dd) at (3,1){};
        \node (aaa) at (3,2){};
        \node (bbb) at (2,2){};
        \node (ccc) at (4,2){};
        \node (ddd) at (3,3){};
        \node (r) at (.5,.5){};
        \node (s) at (1.5,.5){};
         \node (rr) at (2.5,1.5){};
        \node (ss) at (3.5,1.5){};
         \node (rrr) at (2.5,-.5){};
        \node (sss) at (3.5,-.5){};
        \draw[blue] (a) edge node[below=.5mm, draw=none, fill=none, text=black]{I}
            (b) edge node[below, draw=none, fill=none, text=black]{A} (c) edge
            (d);
        \draw[blue] (aa) edge node[below=.5mm, draw=none, fill=none, text=black]{I}
            (bb) edge node[below, draw=none, fill=none, text=black]{A} (cc)
            edge (dd);
        \draw[blue] (aaa) edge node[below=.5mm, draw=none, fill=none, text=black]{I}
            (bbb) edge node[below, draw=none, fill=none, text=black]{A} (ccc)
            edge (ddd);
        \draw[blue] (c) edge (bb) edge (bbb);
        \draw[red] (r) edge (s);
        \draw[red] (rr) edge (ss);
        \draw[red] (rrr) edge (sss);
  \end{tikzpicture}
  \caption{FANOUT gadget for Cartesian grids.}
    \label{fig:cart-fanout}
        \end{minipage}
 \end{figure}
Since our gadgets for our OR and CHOICE can not be immediately redrawn in the
Cartesian grid, we give new gadgets for the Cartesian grid below. First consider the OR gadget in Figure \ref{fig:cart-or}.

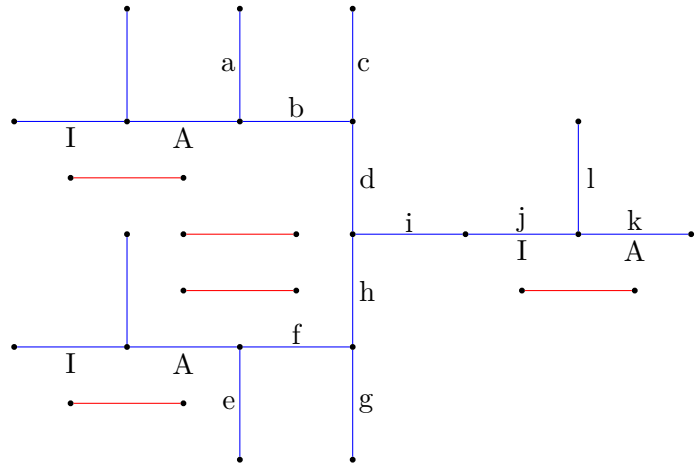
\begin{figure}[ht]
\centering
\begin{tikzpicture}[every node/.style={draw,circle,inner sep=0pt,minimum
    size=1pt, very thick,fill=black}]
        \node (a) at (0,0){};
        \node (b) at (1.5,0){};
        \node (c) at (3,0){};
        \node (d) at (1.5,1.5){};
        \node (e) at (0,3){};
        \node (f) at (1.5,3){};
        \node (h) at (1.5,4.5){};
        \node (g) at (3,3){};
        \node (i) at (4.5,0){};
        \node (j) at (3,4.5){};
        \node (k) at (4.5,1.5){};
        \node (l) at (4.5,3){};
        \node (m) at (6,1.5){};
        \node (n) at (7.5,1.5){};
        \node (o) at (9,1.5){};
        \node (p) at (3,-1.5){};
        \node (q) at (4.5,-1.5){};
        \node (u) at (4.5,4.5){};
        \node (v) at (7.5,3){};
        \node (r) at (.75,-.75){};
        \node (s) at (2.25,-.75){};
        \node (rr) at (.75,2.25){};
        \node (ss) at (2.25,2.25){};
        \node (rrr) at (6.75,.75){};
        \node (sss) at (8.25,.75){};
        \node (rrrr) at (2.25,.75){};
        \node (ssss) at (3.75,.75){};
        \node (rrrrr) at (2.25,1.5){};
        \node (sssss) at (3.75,1.5){};
        \draw[red] (r) edge (s);
        \draw[red] (rr) edge (ss);
        \draw[red] (rrr) edge (sss);
        \draw[red] (rrrr) edge (ssss);
        \draw[red] (rrrrr) edge (sssss);
        \draw[blue] (b) edge node[below=.5mm, draw=none, fill=none, text=black]{I}
            (a) edge node[below, draw=none, fill=none, text=black]{A} (c) edge
            (d);
        \draw[blue] (f) edge node[below=.5mm, draw=none, fill=none, text=black]{I}
            (e) edge node[below, draw=none, fill=none, text=black]{A} (g) edge
            (h);
        \draw[blue] (l) edge node[above, draw=none, fill=none, text=black]{b}
            (g) edge node[right, draw=none, fill=none, text=black]{d} (k);
        \draw[blue] (i) edge node[above, draw=none, fill=none, text=black]{f}
            (c) edge node[right, draw=none, fill=none, text=black]{h} (k);
        \draw[blue] (m) edge node[above, draw=none, fill=none, text=black]{i}
            (k) edge node[below=.5mm, draw=none, fill=none, text=black]{I}
            node[above, draw=none, fill=none, text=black]{j} (n);
        \draw[blue] (j) edge node[left, draw=none, fill=none, text=black]{a} (g);
        \draw[blue] (u) edge node[right, draw=none, fill=none, text=black]{c} (l);
        \draw[blue] (c) edge node[left, draw=none, fill=none, text=black]{e} (p);
        \draw[blue] (i) edge node[right, draw=none, fill=none, text=black]{g} (q);
        \draw[blue] (n) edge node[right, draw=none, fill=none, text=black]{l}
            (v) edge node[below, draw=none, fill=none, text=black]{A}
            node[above, draw=none, fill=none, text=black]{k} (o);
  \end{tikzpicture}
  \caption{OR gadget for Cartesian grids.}
    \label{fig:cart-or}
\end{figure}
  \begin{enumerate}
      \item If the input is $(A,A)$, then edges $a,b,e,f$ are cleared upon
          input. Because Blue needs to clear $c$, Blue has to take edge $c$ or
          $d$, in which case they would prefer $d$ in an optimal strategy. Upon
          taking $d$, edges $c,d,i,h$ are cleared, leaving edges $g,j,l,k$.
          Then Blue would need to take $g$ and $k$, rendering an active output;
      \item If the input is $(A,I)$, then edges $a,b$ are cleared upon input.
          Because Blue needs to clear $c$, Blue has to take either $c$ or $d$,
          in which case they would prefer $d$ in an optimal strategy. Upon
taking $d$, the edges $c,d,h,i$ are cleared, leaving the edges $e,f,g,j,k,l$.
          Then by claiming $f$ and $k$, Blue clears them all, rendering an
          active output;
      \item If the input is $(I,A)$, the output is active due to similar reasoning in Case 2;
      \item If the input is $(I,I)$, then none of the labeled edges are cleared
          upon input. Because Blue has to clear edge $a$, Blue has to take either $a$ or $b$, in which case they would prefer $b$ in an optimal strategy, clearing $a,b,c,d$. Similarly, Blue would claim edge $f$ in an optimal strategy, clearing $e,f,g,h$. At this point, only
$i,j,k,l$ remains and Blue would take $j$ to clear them all,
          rendering an inactive output.
  \end{enumerate}
To conclude our gadget analysis for the game in the Cartesian grid, we show the CHOICE gadget in \ref{fig:cart-choice}.
\begin{figure}[ht]
\centering
\begin{tikzpicture}[every node/.style={draw,circle,inner sep=0pt,minimum
    size=1pt, very thick,fill=black}]
        \node (a) at (1.5,1.5){};
        \node (b) at (0,1.5){};
        \node (c) at (3,1.5){};
        \node (d) at (1.5,3){};
        \node (da) at (3,0){};
        \node (dc) at (3,3){};
        \node (ea) at (4.5,0){};
        \node (eb) at (4.5,1.5){};
        \node (ec) at (4.5,3){};
        \node (fa) at (6,0){};
        \node (fb) at (6,1.5){};
        \node (fc) at (6,3){};
        \node (ga) at (7.5,0){};
        \node (gc) at (7.5,3){};
        \node (aa) at (10.5,0){};
        \node (bb) at (9,0){};
        \node (cc) at (12,0){};
        \node (dd) at (10.5,1.5){};
        \node (aaa) at (10.5,3){};
        \node (bbb) at (9,3){};
        \node (ccc) at (12,3){};
        \node (ddd) at (10.5,4.5){};
        \node (r) at (.75,.75){};
        \node (s) at (2.25,.75){};
        \node (rr) at (9.75,2.25){};
        \node (ss) at (11.25,2.25){};
        \node (rrr) at (9.75,-.75){};
        \node (sss) at (11.25,-.75){};
        \node (rrrr) at (6.75,.75){};
        \node (ssss) at (8.25,.75){};
        \node (rrrrr) at (6.75,1.5){};
        \node (sssss) at (8.25,1.5){};
        \node (rrrrrr) at (6.75,2.25){};
        \node (ssssss) at (8.25,2.25){};
        \draw[blue] (a) edge node[below=.5mm, draw=none, fill=none, text=black]{I}
            (b) edge node[below, draw=none, fill=none, text=black]{A} (c) edge
            (d);
        \draw[blue] (aa) edge node[below=.5mm, draw=none, fill=none, text=black]{I}
            node[above, draw=none, fill=none, text=black]{p} (bb) edge node[below, draw=none, fill=none, text=black]{A} node[above, draw=none, fill=none, text=black]{r}  (cc)
            edge (dd);
        \draw[blue] (aaa) edge node[below=.5mm, draw=none, fill=none, text=black]{I}
            node[above, draw=none, fill=none, text=black]{o} (bbb) edge node[below, draw=none, fill=none, text=black]{A} node[above, draw=none, fill=none, text=black]{q}(ccc)
            edge (ddd);
        \draw[blue] (c) edge node[right, draw=none, fill=none, text=black] {b}
            (da) edge node[right, draw=none, fill=none, text=black] {a} (dc)
            edge node[above, draw=none, fill=none, text=black] {d} (eb);
        \draw[blue] (ea) edge node[above, draw=none, fill=none, text=black]
            {e}(da) edge node[above, draw=none, fill=none, text=black] {j} (fa)
            edge node[right, draw=none, fill=none, text=black] {g} (eb);
        \draw[blue] (ec) edge node[above, draw=none, fill=none, text=black] {c}
            (dc) edge node[above, draw=none, fill=none, text=black] {h} (fc)
            edge node[right, draw=none, fill=none, text=black] {f}(eb);
        \draw[blue] (eb) edge node[above, draw=none, fill=none, text=black]
            {i}(fb);
        \draw[blue] (gc) edge node[above, draw=none, fill=none, text=black] {k}
            (fc) edge node[above, draw=none, fill=none, text=black] {m} (bbb);
        \draw[blue] (ga) edge node[above, draw=none, fill=none, text=black] {l}
            (fa) edge node[above, draw=none, fill=none, text=black] {n} (bb);
        \draw[red] (r) edge (s);
        \draw[red] (rr) edge (ss);
        \draw[red] (rrr) edge (sss);
        \draw[red] (rrrr) edge (ssss);
        \draw[red] (rrrrr) edge (sssss);
        \draw[red] (rrrrrr) edge (ssssss);
  \end{tikzpicture}
  \caption{CHOICE gadget for Cartesian grids.}
    \label{fig:cart-choice}
\end{figure}
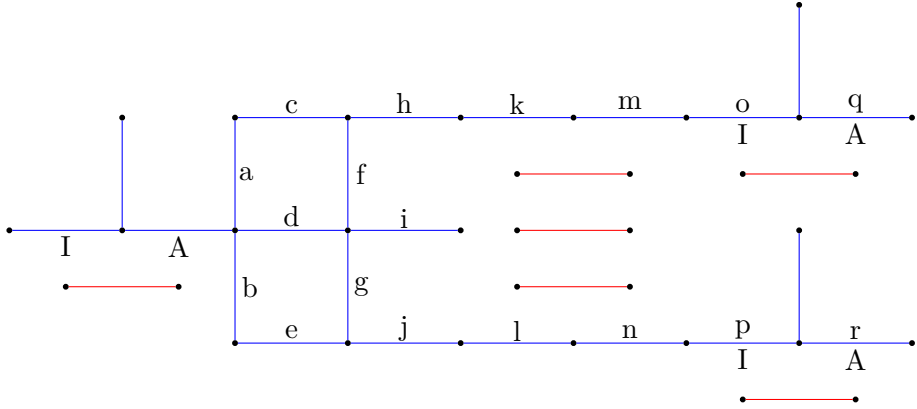
   \begin{enumerate}
        \item If the input is $A$, it clears $a,b$ and $d$. Blue must then
            choose $f$ or $g$ to remove $i$. Suppose they choose $f$, thereby
            clearing $c$ and $h$. They must then select $m$ in order to clear
            $k$, and following this they can choose $q$ on the top row, leading to an active output on top. On the
            bottom row, they still must clear $e$ so they select $j$ which then
            forces them to select $I$ in order to clear $n$. Symmetrical logic
            follows if they choose $g$;
        \item If the input is $I$, Blue should choose $d$ which clears
            $a,b,f,g$ and $i$. Note that the top and bottom rows are symmetric.
            Blue must choose $h$ and $j$ in order to clear, respectively, $c$
            and $e$. This forces them to choose $o$ to clear $m$ and $p$ to clear $n$, leading to an inactive output in both rows.
    \end{enumerate}

\subsection{Triangular grid gadgets}

Again, our gadgets for AND and FANOUT gadgets can be redrawn in the Cartesian
grid with no fundamental edits as shown in Figures \ref{fig:tri-and} and \ref{fig:tri-fanout}. Thus, the logic from the general case follows.

\begin{figure}[ht]
    \centering
    \begin{minipage}{.45\textwidth}
\centering
    \begin{tikzpicture}[every node/.style={draw,circle,inner sep=0pt,minimum size=1pt, very thick,fill=black}]
    \node (a) at (0,0){};
    \node (b) at (1,0){};
    \node (c) at (2,0){};
    \node (d) at (0.5,.75){};
    \node (e) at (0,2){};
    \node (f) at (1,2){};
    \node (g) at (2,2){};
    \node (h) at (0.5,2.75){};
    \node (i) at (3,1){};
    \node (j) at (4,1){};
    \node (k) at (5,1){};
    \node (l) at (3.5,1.75){};
     \node (r) at (.5,-.5){};
        \node (s) at (1.5,-.5){};
         \node (rr) at (.5,1.5){};
        \node (ss) at (1.5,1.5){};
         \node (rrr) at (3.5,.5){};
        \node (sss) at (4.5,.5){};
        \draw[red] (r) edge (s);
        \draw[red] (rr) edge (ss);
        \draw[red] (rrr) edge (sss);
    \draw[blue] (b) edge node[below=.5mm, draw=none, fill=none, text=black] {I} (a)
        edge (d) edge node[below, draw=none, fill=none, text=black] {A} (c);
    \draw[blue] (i) edge (c) edge (g) edge node[below=.5mm, draw=none, fill=none,
        text=black] {I} (j);
    \draw[blue] (f) edge node[below=.5mm, draw=none, fill=none, text=black] {I} (e)
        edge node[below, draw=none, fill=none, text=black] {A} (g) edge (h);
    \draw[blue] (j) edge node[below, draw=none, fill=none, text=black] {A} (k)
        edge (l);
    \end{tikzpicture}
    \caption{AND gadget for triangular grids.}
        \label{fig:tri-and}
    \end{minipage} \hfill
    \begin{minipage}{.45\textwidth}
    \centering
    \begin{tikzpicture}[every node/.style={draw,circle,inner sep=0pt,minimum
        size=1pt, very thick,fill=black}]
    \node (a) at (0,1){};
    \node (b) at (1,1){};
    \node (c) at (2,1){};
    \node (d) at (0.5,1.75){};
    \node (e) at (3,0){};
    \node (f) at (4,0){};
    \node (g) at (5,0){};
    \node (h) at (3.5,.75){};
    \node (i) at (3,2){};
    \node (j) at (4,2){};
    \node (k) at (5,2){};
    \node (l) at (3.5,2.7){};
     \node (r) at (.5,.5){};
        \node (s) at (1.5,.5){};
         \node (rr) at (3.5,1.5){};
        \node (ss) at (4.5,1.5){};
         \node (rrr) at (3.5,-.5){};
        \node (sss) at (4.5,-.5){};
        \draw[red] (r) edge (s);
        \draw[red] (rr) edge (ss);
        \draw[red] (rrr) edge (sss);
    \draw[blue] (b) edge node[below=.5mm, draw=none, fill=none, text=black] {I} (a)
        edge node[below, draw=none, fill=none, text=black] {A} (c) edge (d);
    \draw[blue] (c) edge (i) edge (e);
    \draw[blue] (j) edge node[below=.5mm, draw=none, fill=none, text=black] {I} (i)
        edge node[below, draw=none, fill=none, text=black] {A} (k) edge (l);
    \draw[blue] (f) edge node[below=.5mm, draw=none, fill=none, text=black] {I} (e)
        edge node[below, draw=none, fill=none, text=black] {A} (g) edge (h);
    \end{tikzpicture}
    \caption{FANOUT gadget for triangular grids.}
    \label{fig:tri-fanout}
    \end{minipage}
\end{figure}
Then, we give an alternate OR gadget for the triangular grid in Figure \ref{fig:tri-or}.
\begin{figure}
    \centering
    \begin{tikzpicture}[every node/.style={draw,circle,inner sep=0pt,minimum size=1pt, very thick,fill=black}]
    \node (a) at (0,0){};
    \node (b) at (1.5,0){};
    \node (c) at (3,0){};
    \node (d) at (0.75,1){};
    \node (e) at (0,2){};
    \node (f) at (1.5,2){};
    \node (g) at (3,2){};
    \node (h) at (0.75,3){};
    \node (i) at (3.75,1){};
    \node (j) at (5.25,1){};
    \node (k) at (6.75,1){};
    \node (l) at (4.5,2){};
    \node (m) at (8.25,1){};
    \node (n) at (9.75,1){};
    \node (o) at (7.5,2){};
     \node (r) at (.75,-.5){};
        \node (s) at (2.25,-.5){};
         \node (rr) at (.75,1.5){};
        \node (ss) at (2.25,1.5){};
         \node (rrr) at (7.5,.5){};
        \node (sss) at (9,.5){};
        \node (rrrr) at (4.5,.5){};
        \node (ssss) at (6,.5){};
        \draw[red] (r) edge (s);
        \draw[red] (rr) edge (ss);
        \draw[red] (rrr) edge (sss);
        \draw[red](rrrr) edge (ssss);
    \draw[blue] (f) edge node[below=.5mm, draw=none, fill=none, text=black] {I} (e)
        edge node[below, draw=none, fill=none, text=black] {A} (g) edge (h);
    \draw[blue] (b) edge node[below=.5mm, draw=none, fill=none, text=black] {I} (a)
        edge node[below, draw=none, fill=none, text=black] {A} (c) edge (d);
    \draw[blue] (l) edge node[above, draw=none, fill=none, text=black] {a} (g)
        edge node[left, draw=none, fill=none, text=black] {c} (i) edge
        node[above, draw=none, fill=none, text=black] {e}(j);
    \draw[blue] (j) edge node[above, draw=none, fill=none, text=black] {d}(i)
        edge node[above, draw=none, fill=none, text=black] {f}(k);
    \draw[blue] (m) edge node[below=.5mm, draw=none, fill=none, text=black] {I} (k)
        edge (o) edge node[below, draw=none, fill=none, text=black] {A}(n);
    \draw[blue] (c) edge node[above, draw=none, fill=none, text=black] {b} (i);
    \end{tikzpicture}
    \caption{OR gadget for triangular grids.}
      \label{fig:tri-or}
\end{figure}
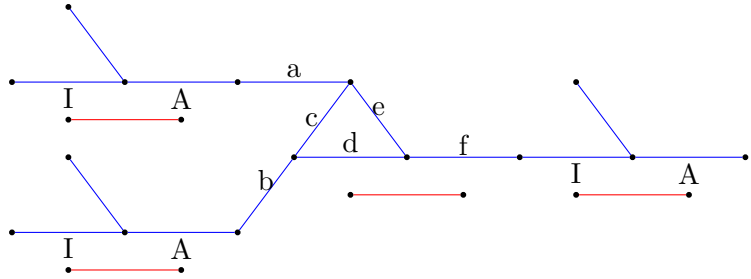
  \begin{enumerate}
         \item If the input is $(A,A)$, then edges $a$ and $b$ vanish. Blue
             should take $d$ or $e$ to clear $c,d,e$ and $f$. Then, they can
             select $A$, giving an active output;
        \item If the input is $(I,A)$, this clears $b$ but not $a$. Blue should then optimally select $e$ in order to clear $a$, which also clears $c,d$ and $f$,
            leading to an active output on the next turn;
        \item If the input is $(A,I)$, this clears $a$ but not $b$. Blue clears
            $b$ by selecting $d$, which also clears $c,e$ and $f$, leading to
            an active output on the next turn;
        \item If the input is $(I,I)$, then Blue would optimally select $c$ to clear $a$
            and $b$. This leaves $f$, so Blue must choose the inactive output.
    \end{enumerate}
Lastly, we describe a CHOICE gadget for triangular grids in Figure \ref{fig:tri-choice}.
\begin{figure}
    \centering
    \begin{tikzpicture}[every node/.style={draw,circle,inner sep=0pt,minimum
        size=1pt, very thick,fill=black}]
    \node (a) at (0,2){};
    \node (b) at (1.5,2){};
    \node (c) at (3,2){};
    \node (d) at (4.5,2){};
    \node (e) at (0.75,3){};
    \node (f) at (2.25,3){};
    \node (g) at (5.25,1){};
    \node (h) at (5.25,3){};
    \node (i) at (6,0){};
    \node (j) at (6,4){};
    \node (k) at (7.5,0){};
    \node (l) at (9,0){};
    \node (m) at (6.75,1){};
    \node (n) at (7.5,4){};
    \node (o) at (9,4){};
    \node (p) at (6.75,5){};
    \node (r) at (6.75,-.5){};
        \node (s) at (8.25,-.5){};
         \node (rr) at (6.75,3.5){};
        \node (ss) at (8.25,3.5){};
         \node (rrr) at (.75,1.5){};
        \node (sss) at (2.25,1.5){};
        \node (rrrr) at (3,1.5){};
        \node (ssss) at (4.5,1.5){};
        \draw[red] (r) edge (s);
        \draw[red] (rr) edge (ss);
        \draw[red] (rrr) edge (sss);
        \draw[red] (rrrr) edge (ssss);
    \draw[blue] (b) edge node[below=.5mm, draw=none, fill=none, text=black] {I} (a)
        edge (e) edge node[below, draw=none, fill=none, text=black] {A} (c);
    \draw[blue] (c) edge node[above, draw=none, fill=none, text=black] {a}(f)
        edge node[above, draw=none, fill=none, text=black] {b}(d);
    \draw[blue] (h) edge node[above, draw=none, fill=none, text=black] {c} (d)
        edge node[above, draw=none, fill=none, text=black] {e}(j);
    \draw[blue] (g) edge node[above, draw=none, fill=none, text=black] {d} (d)
        edge node[above, draw=none, fill=none, text=black] {f}(i);
    \draw[blue] (n) edge node[below=.5mm, draw=none, fill=none, text=black] {I} (j)
        edge node[below, draw=none, fill=none, text=black] {A} (o) edge (p);
    \draw[blue] (k) edge node[below=.5mm, draw=none, fill=none, text=black] {I} (i)
        edge node[below, draw=none, fill=none, text=black] {A} (l) edge (m);
    \end{tikzpicture}
    \caption{CHOICE gadget for triangular grids.}
        \label{fig:tri-choice}
\end{figure}
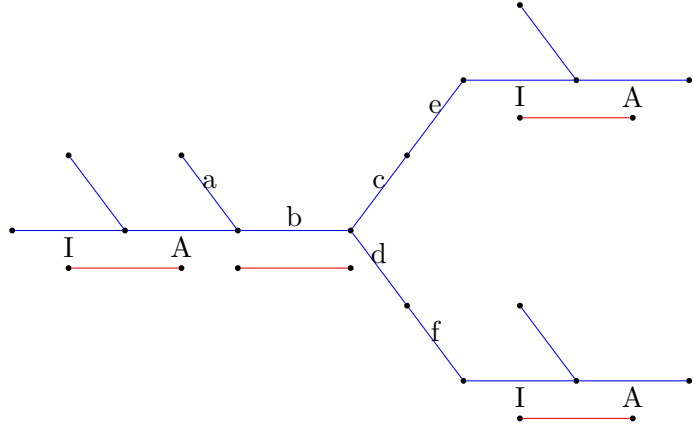
  \begin{enumerate}
         \item If the input is $A$, then edges $a$ and $b$ vanish. Blue would then choose to take $c$ or $d$. If they choose $c$, edges $c,d,e$ vanish while edge $f$ remains, leading to an active output in the upper branch and an inactive output in the lower branch; the case when they choose $d$ similarly leads to an inactive output in the upper branch and an active one in the lower branch;
        \item If the input is $I$, the edge $a$ remains, which urges $B$ to select $b$. Then edges $c$ and $d$ are cleared, leaving edges $e$ and $f$, which leads to active outputs in both branches.
    \end{enumerate}

\subsection{Line Graph Planarity}

The line graphs of all of our gadgets turn out to be planar. In particular, the
planarity of the line graph of CHOICE gadget for Cartesian grids can be
witnessed as shown in Figure \ref{fig:planar-choice-cart}.

\begin{figure}
\centering
\begin{tikzpicture}[every node/.style={draw,circle,inner sep=0pt,minimum
    size=1pt, very thick,fill=black}]
    \node (A)[label=above:A] at (0,0){};
    \node (d)[label=above:d] at (1.5,0){};
    \node (i)[label=above:i] at (3,0){};
    \node (b)[label=above:b] at (0.75,1){};
    \node (a)[label=above:a] at (0.75,-1){};
    \node (e)[label=above:e] at (1.5,2){};
    \node (c)[label=above:c] at (1.5,-2){};
    \node (f)[label=above:f] at (2.25,1){};
    \node (g)[label=above:g] at (2.25,-1){};
    \node (h)[label=above:h] at (3,2){};
    \node (j)[label=above:j] at (3,-2){};
    \draw (d) edge (b) edge (A) edge (a) edge (g) edge (f) edge (i);
    \draw (b) edge (A) edge (e);
    \draw (a) edge (A) edge (c);
    \draw (f) edge (e) edge (i);
    \draw (g) edge (c) edge (i);
    \draw (h) edge (e) edge (f);
    \draw (j) edge (c) edge (g);
    \draw[decorate, decoration={amplitude=2mm}]
  (b) .. controls (-1,0.5) and (-1,-0.5) .. (a);
    \draw[decorate, decoration={amplitude=2mm}]
  (f) .. controls (4,0.5) and (4,-0.5) .. (g);

\end{tikzpicture}
\label{fig:planar-choice-cart}
\end{figure}

\end{document}